\newtheorem{theorem}{Theorem}
\newenvironment{proof}{ \textbf{Proof:} }{ \hfill $\Box$}
\def\bb0{{\mathbb{0}}}
\def\ba{{\mathbf{a}}}
\def\bb{{\mathbf{b}}}
\def\bh{{\mathbf{h}}}
\def\b0{{\mathbf{0}}}
\def\bF{{\mathbf{F}}}
\def\bI{{\mathbf{I}}}
\def\bS{{\mathbf{S}}}
\def\bW{{\mathbf{W}}}
\def\bbC{{\mathbb{C}}}
\def\bbR{{\mathbb{R}}}
\def\bbZ{{\mathbb{Z}}}
\def\cG{\mathcal{G}}
\def\cI{\mathcal{I}}
\def\cO{\mathcal{O}}
\def\cR{\mathcal{R}}
\def\cU{\mathcal{U}}
\def\sfA{\mathsf{A}}
\def\sfB{\mathsf{B}}
\def\sfD{\mathsf{D}}
\def\sfE{\mathsf{E}}
\def\sfL{\mathsf{L}}
\def\sfN{\mathsf{N}}
\def\sfP{\mathsf{P}}
\def\sfR{\mathsf{R}}
\def\sfZ{\mathsf{Z}}
\def\sfc{{\mathsf{c}}}
\def\sfi{{\mathsf{i}}}
\def\sfj{{\mathsf{j}}}
\def\sfm{{\mathsf{m}}}
\def\sfv{{\mathsf{v}}}
\def\sfx{{\mathsf{x}}}
\def\sfy{{\mathsf{y}}}
\def\sfz{{\mathsf{z}}}
\def\sf0{{\mathsf{0}}}
\def\rmF{\mathrm{F}}
\def\rmT{\mathrm{T}}
\def\rmc{{\mathrm{c}}}
\def\rm0{{\mathrm{0}}}
\def\b0{{\pmb{0}}} 
\def\kron{\otimes}
\DeclarePairedDelimiter\ceil{\lceil}{\rceil}
\DeclarePairedDelimiter\floor{\lfloor}{\rfloor}
\begin{document}
	
	\title{Beamforming with hybrid reconfigurable parasitic antenna arrays
	}

		\author{\IEEEauthorblockN{Nitish Vikas Deshpande, \textit{Student Member, IEEE,} Miguel Rodrigo Castellanos, \textit{Member, IEEE,} Saeed R. Khosravirad, \textit{Senior Member, IEEE,} Jinfeng Du,   \textit{Senior Member, IEEE,}   Harish Viswanathan, \textit{Fellow, IEEE,} and Robert W. Heath Jr., \textit{Fellow, IEEE}}\\ 
			\thanks{ {Nitish  Vikas Deshpande  and Robert W. Heath Jr. are
				with the Department of Electrical and Computer Engineering,  University of California San Diego, La Jolla, CA 92093 (email: \{nideshpande, rwheathjr\}@ucsd.edu).}
			Miguel Rodrigo Castellanos is with the Department of Electrical  Engineering and Computer Science at the University of Tennessee, Knoxville, TN 37996 USA (e-mail: mrcastellanos@utk.edu).
			Saeed R. Khosravirad, Jinfeng Du, and Harish Viswanathan are with Nokia
			Bell Laboratories, Murray Hill, NJ 07974, USA (email: \{saeed.khosravirad,  jinfeng.du,
			harish.viswanathan\}@nokia-bell-labs.com).
			This material is also based upon work supported by the National Science Foundation under grant nos. NSF-ECCS-2435261, NSF-CCF-2435254, NSF-ECCS-2414678, and the Army Research Office under Grant W911NF2410107.
		}
	}

	\maketitle
	
	\begin{abstract}
		A parasitic reconfigurable antenna array is a low-power approach for beamforming  using  passive tunable elements.
		Prior work on  reconfigurable antennas in   communication theory is based on ideal radiation pattern abstractions. It does not
		 address the problem of physical realizability.
		Beamforming with parasitic elements is inherently difficult because  mutual coupling creates non-linearity in the beamforming gain objective. We develop a multi-port circuit-theoretic  model of the hybrid array with parasitic  elements and  antennas with active RF chain validated through electromagnetic simulations with a dipole array. We then derive the beamforming weight of the parasitic element using the  theoretical beam pattern expression for the case of a single active antenna and multiple parasitic elements. We show that the parasitic beamforming is challenging because the weights are subject to coupled magnitude and phase constraints. We simplify the beamforming  optimization problem using a shift-of-origin transformation to the typical unit-modulus beamforming weight.
	With this transformation, we derive a closed-form solution for the reconfigurable parasitic reactance. We generalize this solution to the multi-active multi-parasitic hybrid array operating in a multi-path channel.
	  Our proposed hybrid  architecture with parasitic elements outperforms conventional  architectures in terms of energy efficiency.
	\end{abstract}
	
	\begin{IEEEkeywords}Parasitic  antenna, multi-port circuit theory,  beamforming optimization, mutual coupling, energy efficiency.
	\end{IEEEkeywords}
	
	\section{Introduction}\label{sec: Introduction }

\subsection{Motivation and challenges}

Reconfigurable parasitic antennas have the potential to enable low-cost, energy-efficient beamforming for larger arrays~\cite{1141852,858918, kalis2014parasitic,10077503,6474484,7086418}. Unlike conventional arrays which adjust the excitation signal of each element, parasitic arrays use a combination of active and passive components that reduces the overall power consumption~\cite{papageorgiou2018efficient}. Parasitic elements are tuned using reconfigurable components such as varactor or PIN diodes. These elements interact with active currents through mutual coupling.
 The current on a parasitic element depends on the active antenna current via the reconfigurable load, complicating the beamforming gain optimization by creating non-linearity in the objective function.
We focus on developing a low-complexity approach for beamforming optimization, analyze the spectral efficiency and energy efficiency of a parasitic array system, and compare with conventional  architectures.

\subsection{Prior work}

 There are several types of reconfigurable antennas,  categorized by what property of the antenna is configured~\cite{7086418}. In this paper, we focus on configuring the radiation pattern through parasitic elements.  It is worth noting that there are other reconfigurable antenna types like dynamic metasurface antennas based on resonant frequency configuration~\cite{smith_analysis_2017}, movable antennas based on dynamically changing element spacing~\cite{10286328}, and polarization reconfigurable antennas~\cite{9723156} for different wireless applications. 
 Parasitic arrays can be used in different ways.  In \cite{6623074,9993759,6060873, 7432147}, parasitic elements are used  for direct symbol modulation and spatial multiplexing. In  \cite{6060884,sun2017mutual}, they are used for reducing mutual coupling in MIMO arrays.
In our paper,  we focus on using parasitics to improve the  beamforming gain.
  We envision use of parasitic elements to realize large arrays with reduced power consumption and hardware cost compared to conventional architectures.

Parasitic antennas have been widely studied by the antenna community.
While extensive research  has focused on designing and prototyping reconfigurable parasitic antennas~\cite{ 9886764, 10071988,nikkhah2013compact, 4463909, 8410614 }, the configuration methods predominantly rely on look-up tables that map 
target beamforming angles to 
PIN diode states.  These prototype results are limited to a specific design with fixed number of parasitic elements. 
The optimization metrics in \cite{9886764, 10071988, nikkhah2013compact} are limited to radiation pattern and realized gain valid only for a line-of-sight (LOS) channel.
Although \cite{4463909} and \cite{8410614}  provided system level throughput simulations for realistic channel models, the methods for configuring parasitic elements  are not scalable to large arrays because they require a brute-force search across all configurations.

Prior work  on configuring parasitic antennas in the wireless literature relies on  ideal abstractions of the radiation patterns without describing the underlying circuit configurations\cite{8424550, 9429908, 10404876}.
The physical realizability aspect is critical because the parasitic element configuration directly impacts the beamforming weights of the active antennas.
This is due to mutual coupling that causes the input impedance of the active antenna to change for each parasitic configuration. The total radiated power also changes as a function of the parasitic configuration. The active antenna current needs to be adjusted dynamically  to satisfy the power constraint.
Another limitation of prior work is that there are no closed-form expressions for the optimal weights.
Most prior work on configuring parasitic antennas is limited to iterative  methods such as
steepest gradient descent in \cite{966856}, stochastic  optimization in~\cite{papageorgiou2018efficient,4604722}, genetic algorithm in \cite{6060890}, quasi-Newton method in \cite{dardari2024dynamic},  greedy algorithm in \cite{8186257}, iterative mode selection in~\cite{8424550},  and multi-armed bandit theory-based radiation mode selection in~\cite{9429908, 10404876}. 
To summarize, \cite{papageorgiou2018efficient, 966856, 4604722,6060890,dardari2024dynamic,8186257,8424550,9429908,10404876}  fail to provide  insights into the optimal solutions for parasitic configurations under physically consistent constraints.

Multi-port circuit theory is a powerful tool for developing tractable and physically consistent models that account for  mutual coupling.
In prior work, this tool has been used for analyzing dense MIMO arrays\cite{wallace_mutual_2004, ivrlac_toward_2010}.
 The mutual coupling is modeled  in a mathematically tractable way with  impedance matrices~\cite{10373407}. There is limited prior work, however, on using this circuit-theory approach for analyzing parasitic arrays.
 In \cite{6877828}, a signal model was proposed for a parasitic array system with a single RF chain. This signal model was used to optimize a multi-user system in \cite{7037416}. Their approach focused on minimizing the normalized mean squared error between the desired precoder and the actual parasitic element precoder. In \cite{9286866,9969659},  the parasitic reactances were optimized using the correlation coefficient metric that aligns the actual radiation pattern to the desired one from maximum ratio transmission.
 These approaches in \cite{9286866,9969659} do not guarantee optimality from a spectral efficiency perspective because the radiated power constraint is not explicitly included. In \cite{6742715}, closed-form design guidelines were proposed for single-fed parasitic arrays to support arbitrary precoding. These guidelines were generalized to multi-active and multi-parasitic arrays in \cite{tariq2020design}. Although \cite{ 6742715, tariq2020design} provide closed-form expressions for the parasitic impedance, they are limited to design inequalities that the parasitic array should satisfy for the array to radiate. The circuit model has not been used to derive  closed-form solutions that optimize a communication-theoretic metric under realistic power constraints. We focus on deriving a closed-form solution not only because it has low-complexity but also to extract design insights from the solution and analyze scaling trends.

 \subsection{Contributions}

  We use the  multi-port circuit theory approach from \cite{ivrlac_toward_2010} to formulate the multiple-input-single-output (MISO) model for a reconfigurable parasitic array with multiple active and parasitic elements operating in a multi-path wireless channel.  We validate the derived theoretical beamforming gain expression through electromagnetic simulations which enables its use in formulating an optimization problem.
   We  solve the LOS channel beamforming optimization problem for the single active and multiple parasitic antenna case. Since the beamforming optimization problem is mathematically intractable in its original form, we propose a relaxed objective function that leads to a tractable closed-form expression for the reconfigurable parasitic reactance. We validate the accuracy of this solution by comparing it with the numerical approach from the MATLAB optimization toolbox.
  For the  beamforming optimization problem with a  hybrid multi-active multi-parasitic array operating in a multi-path channel, we propose a  near-optimal closed-form solution for the active antenna current  and the parasitic reactances for beamforming.
 We compare the spectral and energy efficiency of the hybrid parasitic array  beamforming with that of the fully digital array without any parasitic elements. We show that  parasitic elements improve the spectral efficiency  without any additional power overhead. This demonstrates that  beamforming with hybrid parasitic arrays is an energy-efficient approach.

			We highlight the novelty of our paper by comparing it to prior studies. The communication-theoretic approaches in \cite{8424550, 9429908, 10404876} employ realistic  system models, such as multi-path channels, but rely on generic abstractions of reconfigurable antenna radiation patterns and numerical optimization methods. The circuit-theoretic approaches in \cite{ 6742715, tariq2020design} utilize multi-port models for parasitic arrays, but are limited to providing closed-form design guidelines for parasitic impedances in the form of inequalities. They do not derive optimal solutions and are also restricted to LOS channels.
						In this paper, we formulate a beamforming optimization problem for a parasitic reconfigurable hybrid array operating in a multi-path propagation environment. We derive closed-form solutions for both the parasitic reactances and the active antenna currents, providing a  comprehensive and tractable framework for system analysis.

				\textit{Organization}:
				In Section~\ref{sec: Parasitic antenna preliminaries}, we use the circuit-theory approach to  formulate the system model for a parasitic array with single active antenna and multiple parasitic elements. We define the SNR and beamforming gain for a LOS channel. In Section~\ref{sec: Line-of-sight beamforming optimization  for a parasitic array with single active element}, we formulate the beamforming  optimization problem for the LOS channel and propose a closed-form expression for the parasitic reactance using an approximate  objective function.  The results from Section~\ref{sec: Line-of-sight beamforming optimization  for a parasitic array with single active element} are generalized to the case of multi-active, multi-parasitic hybrid array in a multi-path channel in Section~\ref{subsec: Multi-active and multi-parasitic beamforming optimization}. In Section~\ref{subsec: Numerical results }, we present numerical results on spectral efficiency and energy efficiency for the parasitic hybrid array, the fully active array, and hybrid phase-shifter sub-connected array  using the dipole antenna planar array implemented in  Feko software. In Section~\ref{sec: Conclusion and future work }, we summarize the key takeaways and discuss directions for future research. The MATLAB simulation code to generate the results is made publicly available to facilitate reproducibility~\cite{nitishcode}.
				
					\textit{Notation}:  A bold lowercase letter $\bm{\sfz}$ denotes a column vector, 	a bold uppercase letter $\bm{\sfZ}$ denotes a matrix, $|\cdot|$  indicates absolute value, $\angle(z)$ denotes argument of a complex number $z$, $(\cdot)^\rmT$ denotes transpose,  $(\cdot)^{\ast}$ denotes conjugate  transpose,  $(\cdot)^{\rmc}$ denotes conjugate, $\| .\|$ denotes the L2 norm, $\|.\|_\rmF$ denotes the Frobenius norm,	$\cR\{z\}$ denotes real part of a complex number $z$, $\cI\{z\}$ denotes  imaginary part of a complex number $z$, $ \bI_N$ represents the identity matrix of size $N$, $\floor*{x}$ is the floor function, and $\ceil*{x}$ is the ceil function, $\bm{\sfA} \kron \bm{\sfB}$ represents the Kronecker product of matrices $\bm{\sfA}$ and $\bm{\sfB}$, $\mathcal{N}_\bbC(0,1)$ represents the complex normal zero mean unit variance random variable.
				
				\section{System  model}\label{sec: Parasitic antenna preliminaries}
				
We use the multi-port circuit theory approach from \cite{wallace_mutual_2004,ivrlac_toward_2010, 10373407} to model the  reconfigurable parasitic array.
In Section~\ref{subsec: Multi-port circuit theory approach for antenna arrays}, we review the multi-port circuit modeling approach for an array with all active antennas similar to \cite{ivrlac_toward_2010}. In Section~\ref{subsec: Parasitic antenna array model }, we formulate the model for a parasitic array with only one active antenna. In Section~\ref{subsec: Parasitic array beamforming gain formulation for a line-of-sight channel}, we formulate the beamforming pattern in terms of the reconfigurable parasitic reactance.

				\subsection{Multi-port circuit theory model of a wireless communication system}\label{subsec: Multi-port circuit theory approach for antenna arrays}

								We first review the model for a conventional antenna array with all active antennas.	 We assume a narrowband MISO system with $N$ antennas at transmitter and single antenna at receiver as shown in Fig.~\ref{fig: system model and circuit model}(a). Each antenna at the transmitter is connected to an RF chain. All signals and multi-port parameters are a function of  frequency and evaluated at the center frequency $f_{\sfc}$ (or center wavelength $\lambda_{\sfc}$). 
					 The MISO system is modeled as an $N+1$ port network, as shown in Fig.~\ref{fig: circuit model}(a).
					In the circuit-theoretic formulation, 
					the EM interaction between different antennas is captured in the form of an impedance matrix. 
								The voltage at each antenna port is related to the currents of all antenna ports through the impedance matrix. 
				For a transmit array with $N$ elements, let the  impedance matrix of the transmit array be $\bm{\sfZ}_{\mathsf{TX}}\in \bbC^{N\times N}$.
				 Let the voltage vector evaluated at $f_{\sfc}$ be $\bm{\sfv}_{\mathsf{TX}}=[\mathsf{v}_0, \mathsf{v}_1, \dots, \mathsf{v}_{N-1} ]^\rmT\in \bbC^{N\times 1}$. Here, each voltage source is an abstraction of the signal at the output of the transmit RF chain.  Each RF chain is comprised of components like  a mixer, a local oscillator, filters, and amplifiers  that we do not show in Fig.~\ref{fig: system model and circuit model}(a) for brevity.
					Let the transmit information symbol in frequency domain be $s$ and we assume it has unit variance. The excitation signal at each antenna is  controlled using the digital beamformer.
				We use	the antenna current  vector to represent the  beamforming vector and denote it as $\bm{\sfi}_{\mathsf{TX}}=[\mathsf{i}_0, \mathsf{i}_1, \dots, \mathsf{i}_{N-1} ]^\rmT\in \bbC^{N\times 1}$. Hence, the transmit signal is $\bm{\sfi}_{\mathsf{TX}} s$.

					   \begin{figure}
					   	\centering
					   	\begin{subfigure}[t]{\linewidth}
					   		\centering
					   		\includegraphics[width=\linewidth]{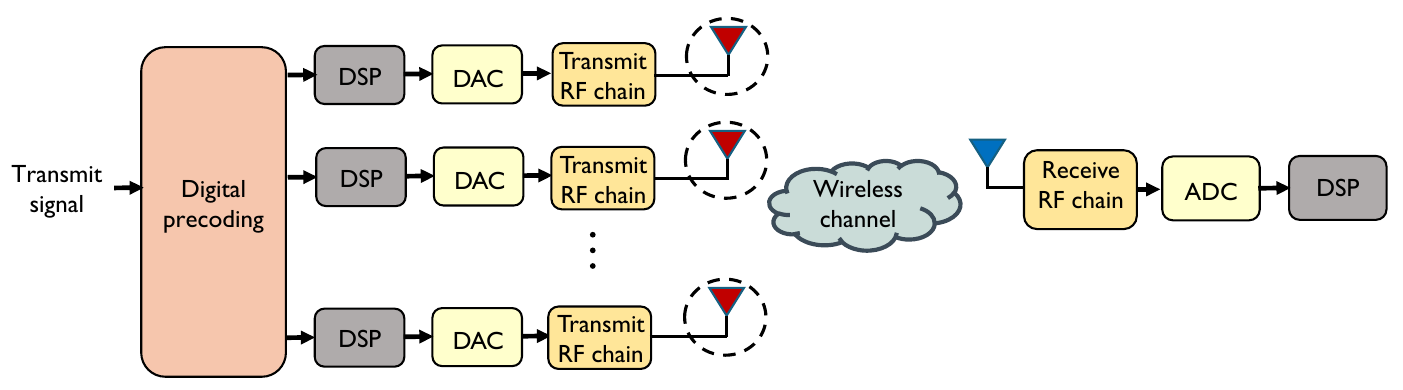}    
					   		\caption
					   		{
					   		}
					   		\label{fig: Conventional antenna array}
					   	\end{subfigure}
					   	\begin{subfigure}[t]{\linewidth}
					   		\centering
					   		\includegraphics[width=\linewidth]{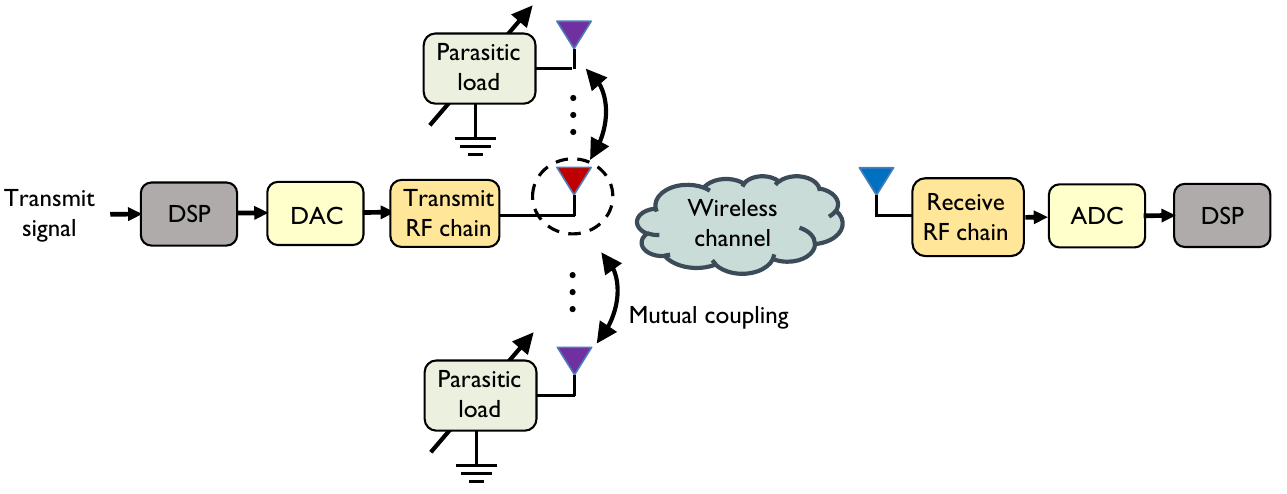}    
					   		\caption
					   		{
					   		}
					   		\label{fig: Parasitic antenna array}
					   	\end{subfigure}
					   	\caption{(a) In a conventional array, each antenna is connected to an RF chain (b) In a parasitic array, only one antenna is connected to an RF chain and the remaining  antennas are tuned by passive loads. 
					   	 }
					   	 \label{fig: system model and circuit model}
					   \end{figure}
					   
					   \begin{figure}
					   	\centering
					   	\begin{subfigure}[t]{0.49\linewidth}
					   		\centering
					   		\includegraphics[width=1\linewidth]{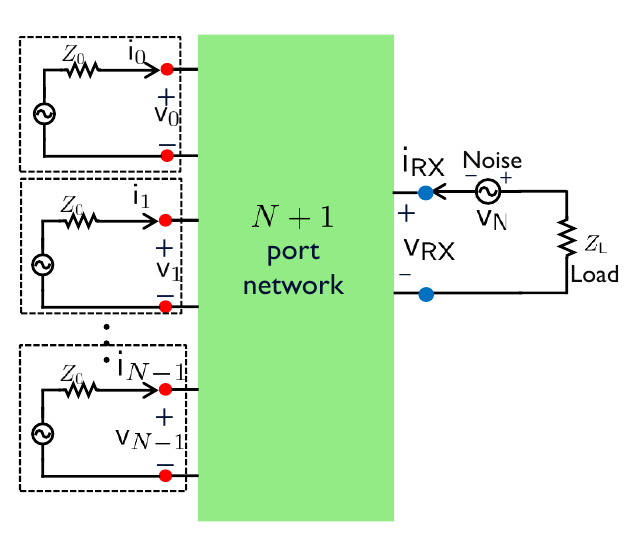}    
					   		\caption
					   		{
					   		}
					   		\label{fig: Multi-port circuit model for conventional array}
					   	\end{subfigure}
					   	\begin{subfigure}[t]{0.49\linewidth}
					   		\centering
					   		\includegraphics[width=1\linewidth]{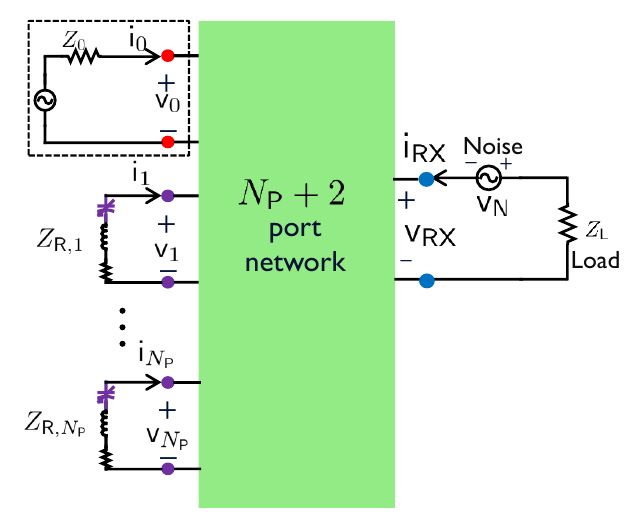}    
					   		\caption
					   		{
					   		}
					   		\label{fig: Multi-port circuit model for parasitic antenna array}
					   	\end{subfigure}
					   	\caption{ (a) Circuit model of a conventional array, where each antenna is modeled as a voltage source (b) Circuit model for a parasitic array, where the parasitic  elements are tuned using variable loads.
					   	}
					   	\label{fig: circuit model}
					   \end{figure}
				
				The circuit model relates the receive voltage to the transmit current.
			The impedance parameter of the single receive antenna is $\sfZ_{\mathsf{RX}}\in \bbC$. The voltage on the receiver antenna port is $\sfv_{\mathsf{RX}}\in \bbC$ and the current is $\sfi_{\mathsf{RX}}\in \bbC$.  The wireless propagation channel impedance vector $\bm{\sfz}_{\mathsf{RT}} = \bm{\sfz}_{\mathsf{TR}} \in \bbC^{N\times 1}$ accounts for the small-scale and large-scale fading in the wireless channel.
				From the multi-port model, we have~\cite{ivrlac_toward_2010}
\begin{subequations}
	\begin{equation}\label{eqn: vtx ztx itx x}
		\bm{\sfv}_{\mathsf{TX}} = 	\bm{\sfZ}_{\mathsf{TX}}  	\bm{\sfi}_{\mathsf{TX}} s + 	\bm{\sfz}_{\mathsf{TR}} 	\sfi_{\mathsf{RX}} ,
	\end{equation}
	\begin{equation}\label{eqn: v rx zrt itx x zrx irx}
		\sfv_{\mathsf{RX}}= \bm{\sfz}_{\mathsf{RT}}^\rmT \bm{\sfi}_{\mathsf{TX}} s  + \sfZ_{\mathsf{RX}} \sfi_{\mathsf{RX}}.
	\end{equation}
\end{subequations}
To simplify the  system model  by decoupling  $	\bm{\sfv}_{\mathsf{TX}}$ from $\sfi_{\mathsf{RX}}$, we
			 apply the unilateral approximation which results in 
			 $	\bm{\sfv}_{\mathsf{TX}}\approx \bm{\sfZ}_{\mathsf{TX}}  	\bm{\sfi}_{\mathsf{TX}} s$.
			 The unilateral approximation  is reasonable because 
			   the transmit array is
				in the far-field from the receive antenna such that the signal
				attenuation between them is large~\cite{ivrlac_toward_2010}.
				This means that the transmitter is unaffected by the
				electromagnetic fields at the receiver~\cite{ivrlac_toward_2010}.

				The circuit theory approach also  provides a physically consistent definition for the total transmit power radiated from the antenna array.  The total transmit power is~\cite{ivrlac_toward_2010}
				\begin{align}\label{eqn: trp}
					P_{\mathsf{TX}}= \mathbb{E}[ \cR  \{ (\bm{\sfi}_{\mathsf{TX}} s)^{\ast} \bm{\sfv}_{\mathsf{TX}}  \}],
				\end{align}
				where the expectation is over the random information symbol.
				Using the unilateral approximation and the assumption of the unit variance transmit information symbol, we obtain
				$P_{\mathsf{TX}}=\bm{\sfi}_{\mathsf{TX}} ^{\ast} \cR\{\bm{\sfZ}_{\mathsf{TX}}\}\bm{\sfi}_{\mathsf{TX}}$.
					The diagonal elements of the  $	\bm{\sfZ}_{\mathsf{TX}}$ matrix are called self-impedances while the non-diagonal elements are called mutual impedances.
 				Much prior work on MIMO assumes that $\bm{\sfZ}_{\mathsf{TX}}$ is a diagonal matrix, i.e.,   mutual impedance is negligible. This leads to the power being proportional to 
				  norm squared of the current vector. In this paper, we use the physically consistent definition of radiated power from \eqref{eqn: trp} that accounts for mutual coupling.

		We establish the input-output signal model by solving for the received voltage.  Let the load impedance on the receive antenna be $Z_{\sfL}$.  At the receiver, the voltage source $\sfv_{\sfN}$ models the noise due to background radiation. Applying Kirchhoff's voltage law at the receive port, we have $		\sfv_{\mathsf{RX}} = - \sfv_{\sfN}	-\sfi_{\mathsf{RX}} 	Z_{\sfL}.$
				 Substituting  $\sfv_{\mathsf{RX}}$ in \eqref{eqn: v rx zrt itx x zrx irx}, we have $	\sfi_{\mathsf{RX}} =  -\frac{\bm{\sfz}_{\mathsf{RT}}^\rmT \bm{\sfi}_{\mathsf{TX}} s  + 	\sfv_{\sfN} }{\sfZ_{\mathsf{RX}} + Z_{\sfL}}$.  Let the  voltage on the load impedance  $\sfv_{\sfL}$ be expressed in terms of the input  $\bm{\sfi}_{\mathsf{TX}} s$ as 
				  $ \sfv_{\sfL} = - \sfi_{\mathsf{RX}} Z_{\sfL} = \frac{Z_{\sfL} }{\sfZ_{\mathsf{RX}} + Z_{\sfL}}(\bm{\sfz}_{\mathsf{RT}}^\rmT \bm{\sfi}_{\mathsf{TX}} s + 	\sfv_{\sfN}) .$
			Assuming that the noise voltage source $\sfv_{\sfN} $	has a variance of $\sigma^2$ and $\mathbb{E}[|s|^2]=1$, the   $\mathsf{SNR}$ at the receiver is 
			\begin{align}\label{eqn: SNR def}
				\mathsf{SNR}&= \frac{|\bm{\sfi}_{\mathsf{TX}}^\rmT \bm{\sfz}_{\mathsf{RT}}|^2 }{\sigma^2}.
			\end{align}
			The $\mathsf{SNR}$ definition in \eqref{eqn: SNR def} is general and applies to both conventional arrays and parasitic arrays.

				\subsection{Parasitic antenna array model with one active element}\label{subsec: Parasitic antenna array model }

				 The array implementation dictates the constraints on the current vector. Beamforming in an antenna array is achieved by configuring the current vector $\bm{\sfi}_{\mathsf{TX}}$. In conventional active arrays, the magnitude and phase of each element in $\bm{\sfi}_{\mathsf{TX}}$ can be adjusted independently. As a result, the active antenna current vector is constrained only by maximum power. In contrast, parasitic arrays create beam  patterns using a combination of active phase control and mutual coupling with parasitic elements that have variable loads. These antennas, known as parasitic elements, are reconfigured by altering their variable loads. Changing the termination of these elements dynamically modifies their radiative properties, enabling beamforming but imposing additional constraints on $\bm{\sfi}_{\mathsf{TX}}$.

			In this section, 	we describe the model for a parasitic antenna array with one active antenna connected to an RF chain and $N_{\mathsf{P}}=N-1$ parasitic elements that are tuned using variable loads as shown in Fig.~\ref{fig: system model and circuit model}(b). The parasitic antenna array is characterized as an $N_{\mathsf{P}}+2$ port network as shown in Fig.~\ref{fig: circuit model}(b).  
			This representation is independent of the physical array geometry.
			The beam pattern, however, depends on the array structure described in Section~\ref{subsec: Parasitic array beamforming gain formulation for a line-of-sight channel}.
			Let the voltage at the port of the active antenna in the parasitic array be denoted as $	\mathsf{v}_0 $ and the current as $\mathsf{i}_0 $. The vector of voltages at the parasitic element ports is $	\bm{\sfv}_{\sfP}=[\mathsf{v}_1, \dots, \mathsf{v}_{N_{\mathsf{P}}} ]^\rmT$. We define $	\bm{\sfv}_{\mathsf{TX}} =[ 	\mathsf{v}_0,	\bm{\sfv}_{\sfP}^\rmT]^\rmT$.			
			 Similarly, the current  vector is  $	\bm{\sfi}_{\sfP}=[\mathsf{i}_1, \dots, \mathsf{i}_{N_{\mathsf{P}}} ]^\rmT$. Thus, $ \bm{\sfi}_{\mathsf{TX}} =[ 	\mathsf{i}_0,	\bm{\sfi}_{\sfP}^\rmT]^\rmT $.
The first element of these vectors $\bm{\sfi}_{\mathsf{TX}}$ and $\bm{\sfv}_{\mathsf{TX}}$ corresponds to the active antenna.

We relate the currents and voltages using impedance matrices.
			 Let the self-impedance of the active antenna be $	\sfz_{00}$. The vector of mutual impedances between the  active antenna and $N_{\mathsf{P}}$ parasitic elements is $\bm{\sfz}_{\sfm}$. The mutual impedance matrix of the $N_{\mathsf{P}}$ parasitic elements is $\bm{\sfZ}_{\mathsf{P}}$ and is fixed based on the array design. We use the block matrix notation to express $   \bm{\sfZ}_{\mathsf{TX}}=
			 \begin{bmatrix}
			 	\sfz_{00} &  \bm{\sfz}_{\sfm}^\rmT \\
			 	\bm{\sfz}_{\sfm}  &  \bm{\sfZ}_{\mathsf{P}}
			 \end{bmatrix} $.
			Using 	the unilateral approximation on  \eqref{eqn: vtx ztx itx x}, we have 		
			\begin{align}\label{eqn:  v Zi for single active multi parasitic}
				\begin{bmatrix}
					\mathsf{v}_0 \\
					\bm{\sfv}_{\sfP}
				\end{bmatrix}= \bm{\sfZ}_{\mathsf{TX}}	\begin{bmatrix}
					\mathsf{i}_0  s\\
					\bm{\sfi}_{\sfP} s
				\end{bmatrix}.
			\end{align}
			The magnitude and phase of $\mathsf{i}_0 $ can be controlled independently through the RF chain. 
			Of note, however, $\bm{\sfi}_{\sfP}$ cannot be tuned independently as the current on each parasitic element is induced from $\mathsf{i}_0 $ through coupling. 

			We now establish the dependence of $\bm{\sfi}_{\sfP}$  on $\mathsf{i}_0 $.
			Let the variable load at the port of the $\ell$th parasitic element be $Z_{\mathsf{R},\ell}$. We define the parasitic reconfigurable load matrix as  $\bm{\sfZ}_{\mathsf{R}}=\mathsf{diag}\{[Z_{\mathsf{R},1}, \dots, Z_{\mathsf{R},N_{\mathsf{P}}}]^\rmT\} $.  The voltage vector at the parasitic ports is  $	\bm{\sfv}_{\sfP} = -  \bm{\sfZ}_{\mathsf{R}}\bm{\sfi}_{\sfP} s .$
			From \eqref{eqn:  v Zi for single active multi parasitic} and expression of $	\bm{\sfv}_{\sfP}$,  the current vector on the parasitic ports in terms of the active antenna current, parasitic element mutual impedance matrix, and the reconfigurable load matrix is
			\begin{align}\label{ip in terms of Zp Zr and io}
				\bm{\sfi}_{\sfP} = -( \bm{\sfZ}_{\mathsf{P}} +\bm{\sfZ}_{\mathsf{R}} )^{-1}\bm{\sfz}_{\sfm} \mathsf{i}_0.
			\end{align}
			For a parasitic reconfigurable array, only current $\mathsf{i}_0$ is an independent variable.  All elements in the vector $\bm{\sfi}_{\sfP}$ are dependent on the reconfigurable load matrix $\bm{\sfZ}_{\mathsf{R}}$ which can be tuned independently, but constrains the set of possible values of $\bm{\sfi}_{\sfP}$.

					\subsection{Parasitic array beamforming gain formulation for a line-of-sight channel}
					\label{subsec: Parasitic array beamforming gain formulation for a line-of-sight channel}
					
					The mathematical formulation in Section~\ref{subsec: Multi-port circuit theory approach for antenna arrays} and Section~\ref{subsec: Parasitic antenna array model } is applicable for any general array with arbitrary  array geometry.
					In this section, we make some specific assumptions on the array geometry similar to the prior work for ease of comparison.
				We assume a linear antenna array oriented along the $x$ axis with inter-element spacing of $d$ as shown in Fig.~\ref{fig: linear array with parasitic elements}.
				The active antenna element is placed at the origin and the parasitic elements are arranged on both sides of the active element. The receive antenna is in the $xy$ plane at an angle $\theta$ from the $y$ axis. We assume a LOS channel between the transmit array and receive antenna. Let $\gamma$ be the coefficient that captures large-scale fading and
				 $		\ba_{\sfP}(\theta)$ be the parasitic antenna array steering vector				 
				 \begin{align}
				 	\ba_{\sfP}(\theta)&=\bigg[e^{-\sfj 2\pi \floor*{\frac{N_\sfP}{2}} \frac{ d\sin(\theta)}{\lambda_{\sfc}}}, \dots,  e^{-\sfj 2\pi \frac{d\sin(\theta)}{\lambda_{\sfc}}}, e^{\sfj 2\pi \frac{d\sin(\theta)}{\lambda_{\sfc}}} ,   \nonumber \\&\dots, e^{\sfj 2\pi \ceil*{\frac{N_\sfP}{2}} \frac{ d\sin(\theta)}{\lambda_{\sfc}}}  \bigg]^\rmT.
				 \end{align}
				 The phase offsets in $\ba_{\sfP}(\theta)$ are calculated relative to the active antenna which is located at the center of the linear array as shown in Fig.~\ref{fig: linear array with parasitic elements}.
			The wireless propagation channel is $	\bm{\sfz}_{\mathsf{RT}}= {\gamma}[1, \ba^\rmT_{\sfP}(\theta)]^\rmT$~\cite{ivrlac_toward_2010}.
				We express the $\mathsf{SNR}$ of a parasitic array for a LOS channel by using  \eqref{eqn: SNR def}  and \eqref{ip in terms of Zp Zr and io} to obtain 
					\begin{align}
						\mathsf{SNR}=  \frac{\gamma^2}{\sigma^2}   |\mathsf{i}_0|^2  |1  - \ba^\rmT_{\sfP}(\theta)( \bm{\sfZ}_{\mathsf{P}} +\bm{\sfZ}_{\mathsf{R}} )^{-1}\bm{\sfz}_{\sfm}|^2 .
					\end{align}
			For a parasitic array, we define the far-field beam pattern as 
			\begin{align}\label{eqn:  G theta Zr}
					\cG(\theta, \bm{\sfZ}_{\mathsf{R}})= |1  - \ba^\rmT_{\sfP}(\theta)( \bm{\sfZ}_{\mathsf{P}} +\bm{\sfZ}_{\mathsf{R}} )^{-1}\bm{\sfz}_{\sfm}|^2 .
			\end{align}
			The beam pattern is a complicated non-linear function of the reconfigurable load matrix
			$\bm{\sfZ}_{\mathsf{R}}$, unlike a conventional array where the beam pattern $|\bm{\sfi}_{\mathsf{TX}}^\rmT \bm{\sfz}_{\mathsf{RT}}|^2$ is based on a quadratic expression of the active antenna current vector.

					\begin{figure}
					\centering
					\includegraphics[width=0.25\textwidth]{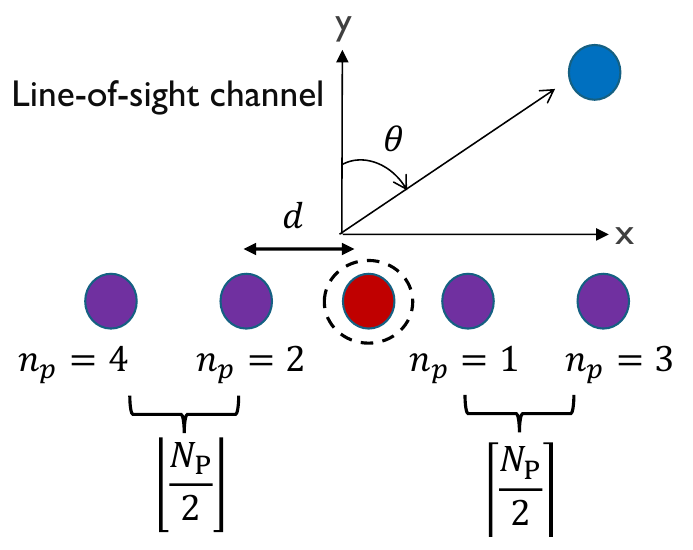}    
					\caption
					{Linear array consisting of one active antenna  and $N_\sfP$ parasitic elements  with a LOS channel to the receive antenna.
					}
					\label{fig: linear array with parasitic elements}
				\end{figure}

				\section{Line-of-sight beamforming   for a parasitic array with single active element}
				\label{sec: Line-of-sight beamforming optimization  for a parasitic array with single active element}
				

We  formulate the parasitic array beamforming problem and propose a simplification to the objective for ease of optimization in Section~\ref{subsec: Parasitic array beamforming problem formulation}. Using the proposed  simplification, we analyze the  parasitic beamforming weights and highlight the optimization challenge due to a coupled magnitude and phase relation in Section~\ref{subsec: Characterizing the parasitic beamforming weight matrix}. We propose a closed-form solution for the parasitic reactance in Section~\ref{subsec: Closed-form solution for the reconfigurable parasitic load reactance } and validate the analytical solution by comparing it with MATLAB based numerical approach in Section~\ref{subsec: Numerical results for LOS beamforming gain optimization}.
				
			\subsection{Parasitic array beamforming problem formulation}\label{subsec: Parasitic array beamforming problem formulation}

We develop an algorithm to adjust the weights to maximize the beamforming gain in the direction of $\theta_1$. This is optimal for a LOS channel.
	Reconfiguring  with varactor or PIN diodes is generally abstracted as terminating the parasitic element with reconfigurable reactive components\cite{nikkhah2013compact}.   
	We assume that the real part of each parasitic load is fixed and identical, i.e., $\cR\{\bm{\sfZ}_{\mathsf{R}}\}=\cR\{Z_{\sfR}\}\bI_{N_{\sfP}}$,  and that the imaginary part $\cI\{\bm{\sfZ}_{\mathsf{R}}\}$
			is reconfigurable.  
			These reactive components can be either inductors or capacitors\cite{papageorgiou2018efficient}.
			Hence, the elements of $\cI\{\bm{\sfZ}_{\mathsf{R}}\}$ can be positive or negative. 
			In practice, the elements of $\cI\{\bm{\sfZ}_{\mathsf{R}}\}$ are confined to a discrete set due to hardware limitations. Specifically, for varactor diodes, the reactive load values depend on the DC biasing voltage.  These constraints result in a finite and discrete set of achievable reactance values. For theoretical analysis, we simplify this by assuming $\cI\{\bm{\sfZ}_{\mathsf{R}}\}$ can take continuous values. This approximation closely mimics what is possible with high-resolution tuning hardware while offering analytical tractability.
	The transmitter  maximizes  \eqref{eqn:  G theta Zr} at    $\theta_1$ by configuring the parasitic reactances as		
				\begin{alignat}{3}
				\textbf{P1: }	& \cI\{ \bm{\sfZ}_{\mathsf{R}}^\star\}  =\underset{\cI\{ \bm{\sfZ}_{\mathsf{R}}\} \in \bbR}{\mbox{ argmax }} \cG(\theta_1, \bm{\sfZ}_{\mathsf{R}}).
			\end{alignat}
			Prior approaches to solve optimizations similar to	\textbf{P1} involve using  iterative approaches~\cite{papageorgiou2018efficient,tariq2020design}. We propose a closed-form solution that yields design insights by analyzing the dependence of the parasitic reactance on parameters like the mutual impedance vector, array steering vector, and self-impedance of the parasitic element.
				
				
				The beam pattern has a complicated dependence on the reconfigurable load $\bm{\sfZ}_{\mathsf{R}}$.	
				The matrix $\bm{\sfZ}_{\mathsf{R}}$ is inside an inverse term in \eqref{eqn:  G theta Zr} which makes it difficult to solve \textbf{P1} in its current form.
			In a conventional array beamsteering problem, the pattern depends on a linear combination of beamforming weights.
			To obtain a linear relation in \eqref{eqn:  G theta Zr}, we  simplify the matrix inverse via   a diagonal approximation.
			From antenna array literature, we know that the self-impedance  is generally higher in magnitude compared to the mutual impedance~\cite{balanis2015antenna}. With this observation, we separate all self-impedances of the parasitic array by defining $\bm{\sfD}_\sfP=\mathsf{diag}\{ \bm{\sfZ}_{\mathsf{P}} \}$ and the hollow matrix $\bm{\sfE}_\sfP=\bm{\sfZ}_{\mathsf{P}}- \bm{\sfD}_\sfP$ which includes all mutual impedances.
			The matrix $\bm{\sfZ}_{\mathsf{R}}$ is diagonal based on the parasitic array design defined in Section~\ref{subsec: Parasitic antenna array model }.
			We combine the two diagonal matrices $\bm{\sfZ}_{\mathsf{R}}$ and $\bm{\sfD}_\sfP$ by defining the parasitic beamforming weight matrix $	\bW$ which is also diagonal as
			\begin{equation}\label{eqn:  W inv}
				\bW^{-1}=\bm{\sfD}_\sfP+\bm{\sfZ}_{\mathsf{R}}.
			\end{equation} 
			With these definitions, we write the inverse term in \eqref{eqn:  G theta Zr} as $(\bm{\sfZ}_{\mathsf{P}} +\bm{\sfZ}_{\mathsf{R}} )^{-1} = (\bm{\sfE}_\sfP + \bm{\sfD}_\sfP +\bm{\sfZ}_{\mathsf{R}})^{-1} = (\bm{\sfE}_\sfP + 	\bW^{-1} )^{-1}$.
			For small $N_\sfP$,	as $\|  \bm{\sfE}_\sfP\|_\rmF$ is small relative to $\|\bW^{-1}  \|_\rmF $, we further expand  the term	$(\bm{\sfE}_\sfP + \bW^{-1}  )^{-1}$  using perturbation theory as
			\begin{equation}\label{eqn: perturb theory approx}
				(\bm{\sfE}_\sfP + \bW^{-1}   )^{-1} =  \bW - \bW \bm{\sfE}_\sfP \bW + \cO(\| \bm{\sfE}_\sfP \|^2_\rmF).
			\end{equation}
			Using the first order term from \eqref{eqn: perturb theory approx}, we approximate the objective $\cG(\theta_1, \bm{\sfZ}_{\mathsf{R}})$ by defining $\widehat{\cG}(\theta_1, \bW) $ in terms of $\bW$  as
			\begin{equation}\label{G hat}
				\widehat{\cG}(\theta_1, \bW)=|1  - \ba^\rmT_{\sfP}(\theta_1)\bW\bm{\sfz}_{\sfm}|^2.
			\end{equation}
				The approximate objective 	$\widehat{\cG}(\theta_1, \bW)$ is easier to optimize because  $\bW$ is a diagonal matrix. 
			This motivates us to use $\widehat{\cG}(\theta_1, \bW)$ as the new objective function instead of $\cG(\theta_1, \bm{\sfZ}_{\mathsf{R}})$.  In Appendix~\ref{app: proof of approximation}, we show that $\cG(\theta_1, \bm{\sfZ}_{\mathsf{R}}) \rightarrow \widehat{\cG}(\theta_1, \bW) $ for small $\|  \bm{\sfE}_\sfP\|_\rmF$.

	\subsection{Coupled magnitude and phase constraint}\label{subsec: Characterizing the parasitic beamforming weight matrix}
	
	In a conventional active array beamforming problem, the magnitude and phase of the beamforming weight can be tuned independently. 
	For parasitic arrays, we analyze the  beamforming weight matrix $\bW$ and show that the phase and magnitude are coupled that makes the optimization problem challenging.
The $i$th   diagonal element of the matrix $\bW$ is $	[\bW]_{ii}=\frac{1}{[\bm{\sfD}_\sfP]_{ii}+[\bm{\sfZ}_{\mathsf{R}}]_{ii}}.$
		The phase  of $	[\bW]_{ii}$ is 
		\begin{align}\label{eqn: varphi i def}
		\angle([\bW]_{ii})&= \mathsf{atan2}(\cI\{[\bW]_{ii}\}, \cR\{[\bW]_{ii}\}),\\ \nonumber&\stackrel{(a)}{=}\mathsf{tan}^{-1}\left(\frac{-(\cI\{[\bm{\sfD}_\sfP]_{ii}\}+\cI\{[\bm{\sfZ}_{\mathsf{R}}]_{ii}\}  )}{(\cR\{[\bm{\sfD}_\sfP]_{ii}\}+\cR\{[\bm{\sfZ}_{\mathsf{R}}]_{ii}\} )}\right),
		\end{align}
		where $(a)$ follows from the fact that the real part is always positive as it is a resistance term.
	We define $\zeta_i=\frac{1}{(\cR\{[\bm{\sfD}_\sfP]_{ii}\}+\cR\{[\bm{\sfZ}_{\mathsf{R}}]_{ii}\}  )} $ and $ \varphi_i = 	\angle([\bW]_{ii})$. The beamforming weight $	[\bW]_{ii}$  is expressed in terms of $\varphi_i$ and $\zeta_i$ as 
	\begin{equation}\label{eqn:  Wii in terms of phase}
		[\bW]_{ii}=\zeta_i\cos(\varphi_i)e^{\sfj \varphi_i}.
	\end{equation}
%
	From \eqref{eqn:  Wii in terms of phase},  the magnitude of 		$[\bW]_{ii}$  varies as a sinusoidal function of the phase angle,  which complicates the beamforming problem.

We visualize the beamforming weight 	$[\bW]_{ii}$ in the complex domain by plotting its locus.
We  assume that the parasitic elements are identical. Hence, 
	the self-impedances of all parasitic elements are same, i.e.,  $[\bm{\sfZ}_{\mathsf{P}}]_{ii}=[\bm{\sfD}_{\mathsf{P}}]_{ii}=Z_\sfP$.	
	We define $\zeta= (  \cR\{Z_\sfP\}  + \cR\{Z_{\sfR}\} )^{-1}$ and set $\zeta_i=\zeta~\forall~i$.
		We plot the locus of the normalized beamforming weight ${[\bW]_{ii} }/{\zeta}$ in complex domain  as a function of the phase angle $\varphi_i$ in Fig.~\ref{fig: bf gain complex response}. The locus of ${[\bW]_{ii}}/{\zeta}$ is a circle with center at $\left(0.5,0\right)$
and a radius of $0.5$.  The phase angle $\varphi_i$ has a limited phase range $[-\pi/2, \pi/2]$.
This constraint  is known as the Lorentzian constraint. 
 It also appears in the  literature on  dynamic metasurface antennas (DMA)~\cite{bowen2022optimizing,10584442,deshpande_qif1}.

We reformulate the optimization problem based on the Lorentzian constraint observation to leverage the solution approach developed in \cite{bowen2022optimizing,10584442,deshpande_qif1}.
From \eqref{eqn: varphi i def} and the definition of $\zeta$, we can express the  optimization variable $\cI\{[\bm{\sfZ}_{\mathsf{R}}]_{ii}\}$ in terms of the phase angle $	\angle([\bW]_{ii})$ as
\begin{align}\label{eqn:  imag Zr ii}
\cI\{[\bm{\sfZ}_{\mathsf{R}}]_{ii}\}=- \cI\{Z_\sfP\}-\frac{\tan(	\angle([\bW]_{ii}))}{\zeta}.
\end{align}
In \eqref{eqn:  imag Zr ii}, we see that there is a one-to-one mapping between $\cI\{[\bm{\sfZ}_{\mathsf{R}}]_{ii}\}$ and the approximate beamforming weight phase angle.
	We formulate the beamforming problem using the approximate objective $\widehat{\cG}(\theta, \bW)$  as 
	\begin{subequations}
	\begin{alignat}{3}
		\textbf{P2: }	& \underset{\bW}{\mbox{ max }} \widehat{\cG}(\theta_1, \bW),\\
			&\text{ s.t. } {[\bW]_{ii}}/{\zeta}=\cos(\varphi_i)e^{\sfj \varphi_i},\label{eqn: cos varphi i}\\
			& \varphi_i \in \left[-{\pi}/{2}, {\pi}/{2}\right].
	\end{alignat}
\end{subequations}
The constraints of problem 	\textbf{P2} are similar to the beamforming constraints  of a DMA ~\cite{bowen2022optimizing,10584442,deshpande_qif1}. This inspires us to  use an optimization solution approach similar to~\cite{deshpande_qif1}.

In \cite{deshpande_qif1},  the DMA beamforming weight is transformed from a unit-modulus beamforming weight to simplify the optimization problem. We express the normalized parasitic beamforming weight in terms of a unit-modulus beamforming weight $e^{\sfj{\phi}_i}$ by applying a shift-of-origin transformation as
\begin{align}\label{eqn:  0.5 1+e j phi}
\frac{[\bW]_{ii}}{\zeta}=\frac{	1}{2}(1+e^{\sfj{\phi}_i}).
\end{align}
Equating the magnitude and phase of the right hand side of \eqref{eqn: cos varphi i} and \eqref{eqn:  0.5 1+e j phi}, we obtain the relation between ${\phi}_i$ and $\varphi_i$ as
\begin{align}\label{eqn: phi i 2 varphi}
	\phi_i= 2 \varphi_i = 2 	\angle([\bW]_{ii}).
\end{align}
The geometric interpretation of \eqref{eqn:  0.5 1+e j phi} and  \eqref{eqn: phi i 2 varphi} is shown in Fig.~\ref{fig: bf gain complex response}.

As $\phi_i$ has range $[-\pi, \pi]$, it is easier to solve problem 
	\textbf{P2} in terms of $\phi_i$ instead of $\varphi_i$.  Let $\boldsymbol{\Phi}=\mathsf{diag}\{[e^{\sfj\phi_1}, \dots, e^{\sfj\phi_{N_\sfP}}]^\rmT\}$.  We express  $\bW$ in terms of $\boldsymbol{\Phi}$ as 
	\begin{align}\label{eqn: bW in terms of Phi}
		\bW= \frac{\zeta}{2}(\bI_{N_\sfP}+ \boldsymbol{\Phi})		.
	\end{align}
	By substituting $\bW$ from \eqref{eqn: bW in terms of Phi} in \eqref{G hat},  we obtain
	\begin{align}\label{eqn:  Ghat in terms of Phi}
	\widehat{\cG}(\theta_1, \boldsymbol{\Phi})=\left|1  - \frac{\zeta}{2}\ba^\rmT_{\sfP}(\theta_1)(\bI_{N_\sfP}+ \boldsymbol{\Phi})\bm{\sfz}_{\sfm} \right|^2.
	\end{align}
The  problem 
\textbf{P2} is then equivalent to

	\begin{subequations}
	\begin{alignat}{3}
		\textbf{P3: }	& \underset{\boldsymbol{\Phi}}{\mbox{ max }} \widehat{\cG}(\theta_1, \boldsymbol{\Phi}),\\
		&\text{ s.t. } \angle([\boldsymbol{\Phi}]_{ii})\in [-\pi, \pi]\label{eqn: phi range}.
	\end{alignat}
\end{subequations}
We solve 	\textbf{P3} in terms of $\boldsymbol{\Phi}$ and then obtain the closed-form  expression of parasitic reactance using \eqref{eqn:  imag Zr ii} and  \eqref{eqn: phi i 2 varphi}.

\begin{figure}
	\centering
	\includegraphics[width=0.5\textwidth]{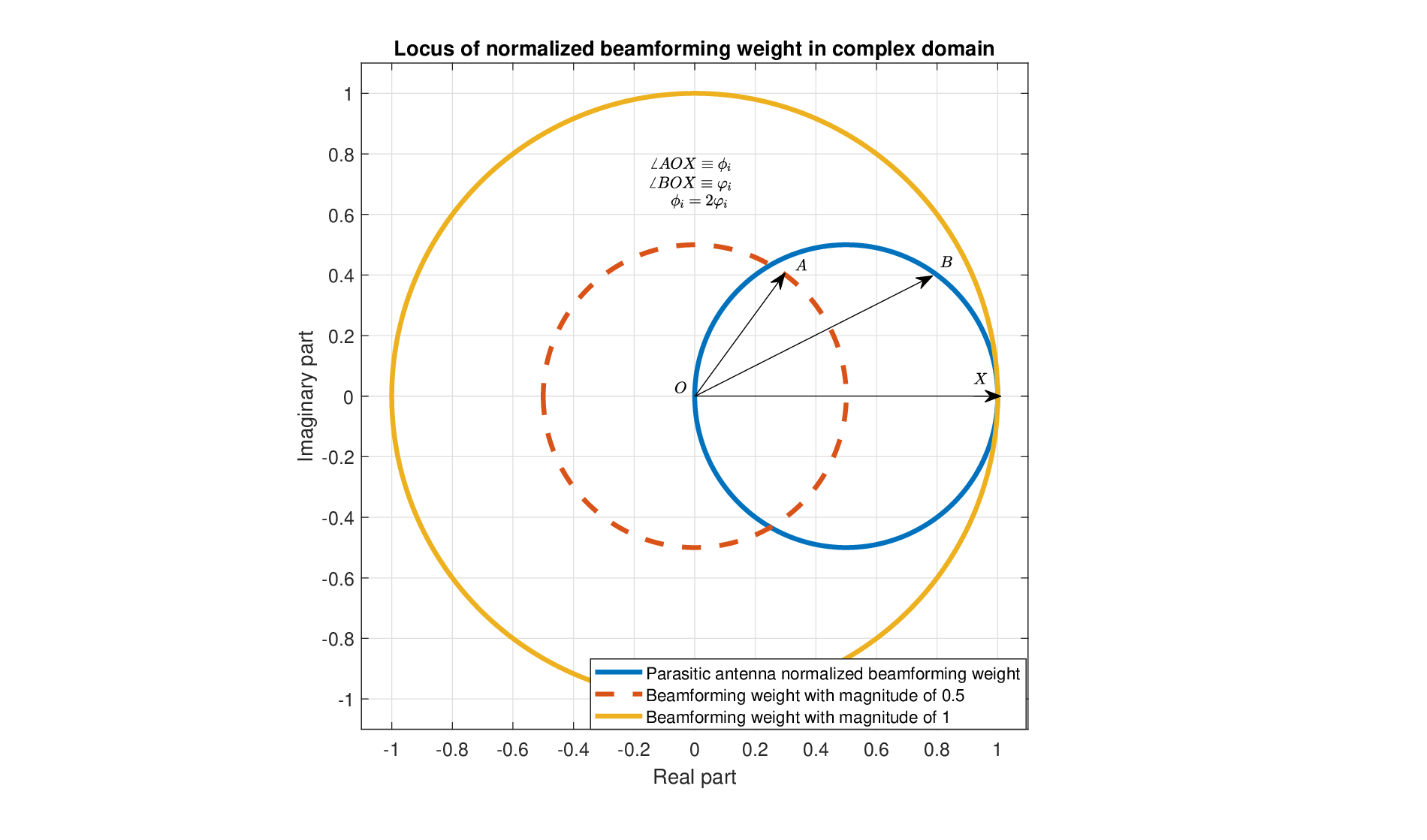}    
	\caption
	{The locus of the parasitic beamforming weight in the complex plane is a circle centered at $(0.5, 0)$ with radius $0.5$. We use a shift-of-origin transformation that reformulates the optimization in terms of the unit magnitude beamforming weight to avoid dealing with the coupled magnitude and phase constraint.
	}
	\label{fig: bf gain complex response}
\end{figure}

\subsection{Closed-form solution for the reconfigurable parasitic load reactance}\label{subsec: Closed-form solution for the reconfigurable parasitic load reactance }

We follow a mathematical approach similar to \cite{deshpande_qif1} to derive the closed-form solution for $\boldsymbol{\Phi}$. In the following theorem, we provide the closed-form solution to \textbf{P3}.
\begin{theorem}\label{thm: closed-form solution for phase}
	The closed-form solution to problem \textbf{P3} is 
		\begin{align}\label{eqn:  Phi i optimal}
		\angle([\boldsymbol{\Phi}]_{ii})\!=\!\angle\bigg(\!\! 1\!-\!\frac{\zeta\ba^\rmT_{\sfP}(\theta_1)\bm{\sfz}_{\sfm}}{2}\!\!\!\bigg)\!-\!  \angle\left([\ba_{\sfP}(\theta_1)]_i  [\bm{\sfz}_{\sfm}]_i\right)\!+\!(2m\!+\!1)\pi\!,
	\end{align}
	where $m$ is any integer.
\end{theorem}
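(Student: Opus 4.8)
The plan is to expand the approximate objective $\widehat{\cG}(\theta_1, \boldsymbol{\Phi})$ in \eqref{eqn:  Ghat in terms of Phi} so that its dependence on the free phases $\phi_i$ appears as a sum of independently rotatable complex phasors, and then maximize by coherent phase alignment. First I would use the diagonal structure $\boldsymbol{\Phi}=\mathsf{diag}\{[e^{\sfj\phi_1}, \dots, e^{\sfj\phi_{N_\sfP}}]^\rmT\}$ to write
\[
\ba^\rmT_{\sfP}(\theta_1)(\bI_{N_\sfP}+ \boldsymbol{\Phi})\bm{\sfz}_{\sfm} = \ba^\rmT_{\sfP}(\theta_1)\bm{\sfz}_{\sfm} + \sum_{i=1}^{N_\sfP} [\ba_{\sfP}(\theta_1)]_i [\bm{\sfz}_{\sfm}]_i e^{\sfj\phi_i}.
\]
This separates $\widehat{\cG}$ into a phase-independent constant $c \bydef 1 - \frac{\zeta}{2}\ba^\rmT_{\sfP}(\theta_1)\bm{\sfz}_{\sfm}$ and a sum of $N_\sfP$ terms $-\frac{\zeta}{2}[\ba_{\sfP}(\theta_1)]_i [\bm{\sfz}_{\sfm}]_i e^{\sfj\phi_i}$, each of fixed modulus but with freely tunable phase, so that $\widehat{\cG}(\theta_1, \boldsymbol{\Phi}) = \bigl| c - \frac{\zeta}{2}\sum_{i} [\ba_{\sfP}(\theta_1)]_i[\bm{\sfz}_{\sfm}]_i e^{\sfj\phi_i} \bigr|^2$.

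Next I would recognize the maximization as adding $N_\sfP$ independently rotatable phasors to the fixed phasor $c$ so as to maximize the modulus of the resultant. By the triangle inequality this modulus is at most $|c| + \frac{\zeta}{2}\sum_i |[\ba_{\sfP}(\theta_1)]_i [\bm{\sfz}_{\sfm}]_i|$, with equality exactly when each phasor $-\frac{\zeta}{2}[\ba_{\sfP}(\theta_1)]_i[\bm{\sfz}_{\sfm}]_i e^{\sfj\phi_i}$ is collinear with and points in the same direction as $c$. Since each $\phi_i$ ranges over the full period $[-\pi, \pi]$ (the constraint of \textbf{P3}) and the $\phi_i$ are mutually independent, simultaneous alignment of all terms is achievable, so the bound is tight and the per-element conditions decouple.

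The per-element alignment condition is $\angle\bigl(-\frac{\zeta}{2}[\ba_{\sfP}(\theta_1)]_i[\bm{\sfz}_{\sfm}]_i e^{\sfj\phi_i}\bigr) = \angle(c)$ modulo $2\pi$. Using $\zeta > 0$ (it is the reciprocal of a sum of resistances, as in the positivity argument accompanying \eqref{eqn: varphi i def}) so that $\angle(\zeta/2)=0$, together with $\angle(-1)=\pi$, this reads $\pi + \angle([\ba_{\sfP}(\theta_1)]_i[\bm{\sfz}_{\sfm}]_i) + \phi_i = \angle(c) + 2m\pi$. Solving for $\phi_i = \angle([\boldsymbol{\Phi}]_{ii})$ and reindexing the integer $m$ yields precisely \eqref{eqn:  Phi i optimal}.

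The main obstacle is the careful bookkeeping of the extra $\pi$ phase contributed by the negative sign: the optimality condition aligns $-[\ba_{\sfP}(\theta_1)]_i[\bm{\sfz}_{\sfm}]_i e^{\sfj\phi_i}$, not $[\ba_{\sfP}(\theta_1)]_i[\bm{\sfz}_{\sfm}]_i e^{\sfj\phi_i}$, with $c$, and it is this sign that forces the odd-multiple offset $(2m+1)\pi$ rather than an even one. I would also flag the degenerate case $c=0$, where $\angle(c)$ is undefined and any common phasor direction is optimal; away from this non-generic case the solution is well defined, and the positivity of $\zeta$ is exactly what lets me discard $\angle(\zeta/2)$ from the alignment equation.
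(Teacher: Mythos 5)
Your proof is correct and follows essentially the same route as the paper's: the identical decomposition of the objective into the constant $1 - \frac{\zeta}{2}\ba^\rmT_{\sfP}(\theta_1)\bm{\sfz}_{\sfm}$ plus independently rotatable phasors $-\frac{\zeta}{2}[\ba_{\sfP}(\theta_1)]_i[\bm{\sfz}_{\sfm}]_i e^{\sfj\phi_i}$, maximized by coherent phase alignment, with the same sign bookkeeping in which the leading minus sign forces the odd offset $(2m+1)\pi$. The only cosmetic difference is that you apply the triangle inequality to the full modulus in one stroke, whereas the paper expands the square and argues that phase alignment simultaneously maximizes the cross term and the quadratic term; your explicit flag of the degenerate case where the constant term vanishes is a minor point the paper omits.
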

\begin{proof}
	See Appendix~\ref{proof: thm: closed-form solution for phase} for the proof.
\end{proof}

   From the closed-form expression of $\boldsymbol{\Phi}$ in \eqref{eqn: Phi i optimal}, the term $ \angle\left( 1-{\zeta}\ba^\rmT_{\sfP}(\theta_1)\bm{\sfz}_{\sfm}/{2}\right) $ is common across all parasitic elements. In standard beamforming problems, the optimal solution is unaffected by common phase shifts~\cite{balanis2015antenna}. For beamforming architectures like parasitic antennas or DMAs, which have coupled magnitude and phase constraints as in \eqref{eqn: cos varphi i}, the reformulation introduces an offset term in the objective function that is independent of the beamforming weights, as shown in \eqref{eqn: Ghat in terms of Phi}. This parasitic constraint alters the problem, causing the optimal phase angle solution to depend on common phase rotations.

We  derive a closed-form expression that explicitly shows the dependence of the $i$th element's   load reactance on the  array steering vector, mutual impedance vector, self-impedance of the parasitic element, and real part of the parasitic load.
\begin{theorem}\label{thm: Imag reconfig reactance}
The closed-form solution for the optimal reactance obtained from the solution to \textbf{P3} is
\begin{align}\label{eqn: T zr closed form}
	&\cI\{[\bm{\sfZ}_{\mathsf{R}}]_{ii}\}=- \cI\{Z_\sfP\}-\frac{\cot\left(\frac{\angle([\ba_{\sfP}(\theta_1)]_i  [\bm{\sfz}_{\sfm}]_i)-\angle(   1-\frac{\zeta}{2}\ba^\rmT_{\sfP}(\theta_1)\bm{\sfz}_{\sfm})}{2}\right)  }{\zeta}.
\end{align}
\end{theorem}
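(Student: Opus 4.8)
The plan is to recognize that Theorem~\ref{thm: Imag reconfig reactance} requires no fresh optimization: it is a deterministic change of variables that pushes the optimal $\boldsymbol{\Phi}$ from Theorem~\ref{thm: closed-form solution for phase} back through the two invertible bridging relations already set up in Section~\ref{subsec: Characterizing the parasitic beamforming weight matrix}. The first is the one-to-one map \eqref{eqn: imag Zr ii} between the physical reactance $\cI\{[\bm{\sfZ}_{\mathsf{R}}]_{ii}\}$ and the weight phase $\angle([\bW]_{ii})$; the second is the shift-of-origin identity \eqref{eqn: phi i 2 varphi}, $\angle([\boldsymbol{\Phi}]_{ii})=2\angle([\bW]_{ii})$. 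Composing these with the solved phase should recover the stated closed form.

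First I would use \eqref{eqn: phi i 2 varphi} to write $\angle([\bW]_{ii})=\tfrac{1}{2}\angle([\boldsymbol{\Phi}]_{ii})$ and substitute the optimal phase \eqref{eqn: Phi i optimal}. Abbreviating $A\bydef\angle(1-\tfrac{\zeta}{2}\ba^\rmT_{\sfP}(\theta_1)\bm{\sfz}_{\sfm})$ and $B\bydef\angle([\ba_{\sfP}(\theta_1)]_i[\bm{\sfz}_{\sfm}]_i)$, the half-angle becomes $\angle([\bW]_{ii})=\tfrac{A-B}{2}+\tfrac{(2m+1)\pi}{2}$. Inserting this into \eqref{eqn: imag Zr ii} then reduces the entire task to evaluating $\tan(\angle([\bW]_{ii}))$.

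The key, and essentially the only nontrivial, step is the trigonometric simplification of this tangent. Here I would invoke the $\pi$-periodicity of $\tan$ to discard the $m\pi$ contribution, and then the co-function identity $\tan(x+\pi/2)=-\cot(x)$ to convert the remaining $+\pi/2$ offset into a cotangent. Using that $\cot$ is odd, this yields $\tan(\angle([\bW]_{ii}))=-\cot(\tfrac{A-B}{2})=\cot(\tfrac{B-A}{2})$. Substituting back into \eqref{eqn: imag Zr ii} reproduces \eqref{eqn: T zr closed form} directly, with the stated sign and argument.

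I expect the main obstacle to be bookkeeping rather than anything conceptual. One must confirm that the arbitrary integer $m$ in \eqref{eqn: Phi i optimal} genuinely washes out, which it does precisely because $\tan$ has period $\pi$ and $m\pi$ is an integer multiple of that period, and that the half-angle $\varphi_i=\angle([\bW]_{ii})$ remains within the principal branch $[-\pi/2,\pi/2]$ on which \eqref{eqn: imag Zr ii} was derived. A short sanity check that this range is consistent with the Lorentzian constraint \eqref{eqn: cos varphi i} (it is, since $\angle([\boldsymbol{\Phi}]_{ii})\in[-\pi,\pi]$ forces $\varphi_i\in[-\pi/2,\pi/2]$ after the factor of one-half) would close the argument.
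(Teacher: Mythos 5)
Your proposal is correct and follows exactly the paper's own proof: both invert the bridge $\angle([\bW]_{ii})=\tfrac{1}{2}\angle([\boldsymbol{\Phi}]_{ii})$ from \eqref{eqn: phi i 2 varphi}, substitute the optimal phase \eqref{eqn: Phi i optimal} into \eqref{eqn: imag Zr ii}, and reduce $\tan\bigl(\tfrac{A-B}{2}+\tfrac{(2m+1)\pi}{2}\bigr)$ to $\cot\bigl(\tfrac{B-A}{2}\bigr)$ — the paper does this via the single identity $\tan\bigl(\tfrac{2m+1}{2}\pi-\theta\bigr)=\cot(\theta)$, which you merely decompose into $\pi$-periodicity, the co-function identity, and oddness of $\cot$. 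Your added check that $m$ washes out and that $\varphi_i$ stays in $[-\pi/2,\pi/2]$ is a harmless (and slightly more careful) elaboration of what the paper leaves implicit.
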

\begin{proof}
	Using \eqref{eqn:  imag Zr ii}, \eqref{eqn: phi i 2 varphi}, and  \eqref{eqn:  Phi i optimal}, and   $\tan\left(  \frac{2m+1}{2}\pi-\theta\right)=\cot(\theta)~\forall~m \in \bbZ$, we obtain the desired result.
	\end{proof}

From \eqref{eqn: T zr closed form}, we see that the first term in the expression of $\cI\{[\bm{\sfZ}_{\mathsf{R}}]_{ii}\}$ is negative of the self reactance of the parasitic element. 
We observe that this first  term is the same for all parasitic elements.
This means  the proposed solution can be implemented with a simple hardware design for the reconfigurable load that has a common fixed reactance to cancel out the parasitic antenna self reactance connected in series with a tunable reactance  given by the second term in \eqref{eqn: T zr closed form} which depends on the steering vector.

	The closed-form parasitic reactance expression is optimal for $N_{\sfP}=1$ because the approximation error $\widetilde{\cG}(\theta, Z_{\mathsf{R}})=0$. 
For $N_{\sfP}>1$, the closed-form expression in \eqref{eqn: T zr closed form} is sub-optimal for problem  \textbf{P1} because  the optimization is based on the approximate objective $\widehat{\cG}(\theta, \bW)$.
In the numerical results section, we show that the proposed closed-form solution is nearly-optimal for $N_{\sfP}=2$ by comparing it with the solution obtained from the MATLAB Global search toolbox. As the value of $N_{\sfP}$ increases beyond 2, the approximation is worse. For practical designs, however, it suffices to use $N_{\sfP}=2$ as shown in Section~\ref{subsec: Numerical results }.

\subsection{Numerical results for LOS beamforming  optimization}\label{subsec: Numerical results for LOS beamforming gain optimization}

In this section, we evaluate the performance of the proposed parasitic beamforming approach for a dipole array for different values of $N_{\sfP}$.  The simulation parameters  are as follows: The center frequency $f_\sfc=7$ GHz, the radius of the dipole antenna is $\frac{\lambda_\sfc}{500}$, and length is $\frac{\lambda_\sfc}{2}$. The inter-element spacing is $0.4 \lambda_\sfc$. The dipole array impedance matrix is obtained similar to our prior work~\cite{deshpande2023analysis} which used expressions of mutual and self impedance from \cite{balanis2015antenna}.

In Fig.~\ref{fig: Np4 Np2 Np1 d 0.4}, we plot the maximum beamforming pattern ${\mbox{ max }}_{\cI\{ \bm{\sfZ}_{\mathsf{R}}\} } \cG(\theta_1, \bm{\sfZ}_{\mathsf{R}})$ as a function of the  angle $\theta_1$ for $N_\sfP \in \{1, 2, 4\}$. We compare the proposed closed-form approach in \eqref{eqn: T zr closed form} to that obtained by numerically solving problem 	\textbf{P1} using MATLAB Global Search function. For $N_{\sfP}=1$, both solutions coincide as shown in Fig.~\ref{fig: Np4 Np2 Np1 d 0.4}.
For $N_\sfP=2$ and $N_\sfP=4$, the performance of the proposed approach closely follows the numerical approach.  The main benefit of the proposed approach is that it is easier to compute compared to the solution obtained from the  iterative global search.
An interesting feature of the parasitic array architecture is that an increase in the beamforming pattern maximum is  obtained by only increasing $N_\sfP$ and not adding any extra active antenna. This gain also translates to spectral efficiency gain which we demonstrate in Section~\ref{subsec: Numerical results }.

\begin{figure}
	\centering
	\includegraphics[width=0.5\textwidth]{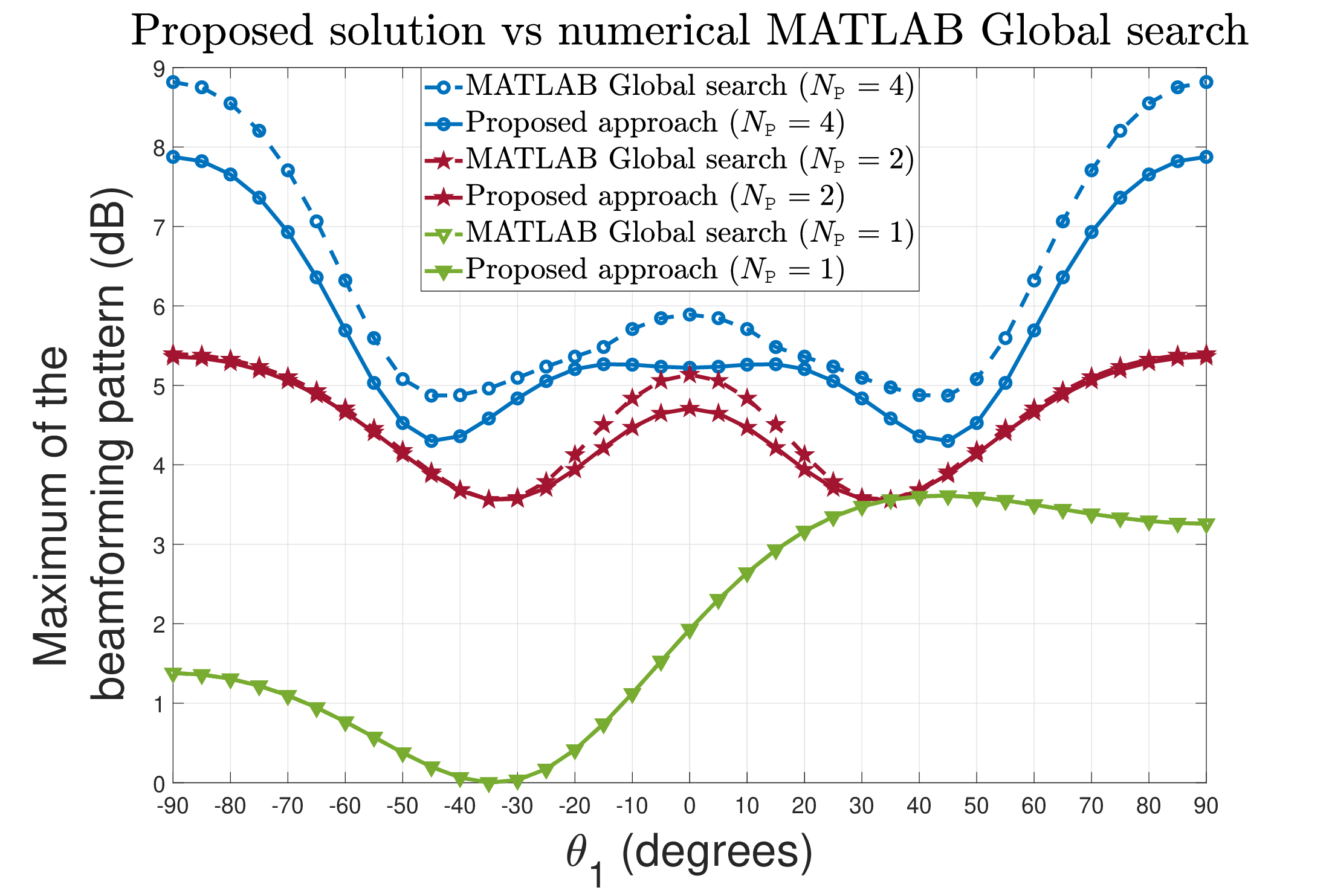}    
	\caption
	{The maximum value of the  beamforming pattern  at each  angle is shown for different values of $N_\sfP$. The proposed closed-form approach is exactly same as the MATLAB global search for $N_\sfP=1$ and approximately similar for $N_\sfP=\{2,4\}$.
	}
	\label{fig: Np4 Np2 Np1 d 0.4}
\end{figure}


The parasitic array exhibits some unique characteristics because of its unconventional beamforming approach and array configuration.
The array geometry, i.e., location of the parasitic element with respect to the active antenna and the number of parasitic elements impacts the beamforming pattern characteristics.
For example, the symmetricity of the maximum beamforming pattern plot depends on whether $N_{\sfP}$ is even or odd.
From Fig.~\ref{fig: Np4 Np2 Np1 d 0.4}, we observe that for $N_\sfP=4$, the  beamforming pattern is higher for angles close to $90^{\circ}$  compared to  angles close to $0^{\circ}$. This is because of the superdirectivity effect observed at endfire angle for  a dipole array~\cite{deshpande2023analysis}. We leverage this observation in  the design of a planar array with multiple active and parasitic elements in Section~\ref{subsec: Multi-active and multi-parasitic beamforming optimization}.

				\section{Beamforming with hybrid multi-active and multi-parasitic array in a  multi-path channel}\label{subsec: Multi-active and multi-parasitic beamforming optimization}
				
				A hybrid architecture combining multiple active antennas with parasitic elements offers a  practical solution to adding parasitics.
			It overcomes the problem that 
				 as inter-element spacing grows, the mutual coupling effect diminishes, leading to reduced performance gains. This architecture leverages the benefits of active antennas for improved control and parasitic elements for cost efficiency, making them well-suited for large-scale deployments.
								In this section, we generalize the circuit model derived previously to the case of  multi-active and multi-parasitic array. Then, we formulate the SNR maximization problem under radiated power constraint. Finally, we derive a low-complexity closed-form solution for  jointly optimizing active antenna currents and parasitic reactances in Section~\ref{sec: low complexity closed form soln}.

				\subsection{Multi-port circuit model and channel model generalization}\label{sec: multiport ckt model chan mod gen}
				

We extend the multi-port circuit model from Fig.~\ref{fig: circuit model},  to now accommodate multiple active antennas. Let the array consist of $N_\sfA$ active antennas, each paired with $N_\sfP$ parasitic elements, resulting in a total of $N_\sfA N_\sfP$ parasitic elements. 	Let the $N_\sfA\times~1$ active antenna current vector be $	\bm{\sfi}_\sfA=[\mathsf{i}_{\sfA,1}, \dots, \mathsf{i}_{\sfA,N_\sfA} ]^\rmT$ and $N_\sfA N_\sfP\times 1$ parasitic element current vector be 
$	\bm{\sfi}_{\sfP}=[\mathsf{i}_{1,1}, \dots, \mathsf{i}_{N_{\mathsf{P}}, 1}, \dots,  \mathsf{i}_{1,N_{\sfA}}, \dots, \mathsf{i}_{N_\sfP,N_{\sfA}}  ]^\rmT$. The current beamforming vector across all active and parasitic elements is denoted as $\bm{\sfi}_{\mathsf{TX}}=[\bm{\sfi}_\sfA^\rmT, \bm{\sfi}_{\sfP}^\rmT]^\rmT $.

  The mutual impedance matrix of dimension $ N_\sfP N_{\sfA} \times N_{\sfA} $ between the parasitic elements and active antennas is $\bm{\sfZ}_{\sfm}$. 
We denote the mutual coupling vector between the $j$th active antenna and the $N_\sfP$ parasitic elements belonging to the row of the $i$th active antenna as $\bm{\sfz}_{\sfm, i, j} \in \bbC^{N_\sfP \times 1}$.
We express $\bm{\sfZ}_{\sfm}$ in terms of the block-matrix notation as 
\begin{align}
	\bm{\sfZ}_{\sfm} = \begin{bmatrix}
		\bm{\sfz}_{\sfm, 1, 1} &\dots &	\bm{\sfz}_{\sfm, 1, N_\sfA}\\
		\vdots & \ddots & \vdots \\
		\bm{\sfz}_{\sfm, N_\sfA, 1} &  \dots  &\bm{\sfz}_{\sfm, N_\sfA, N_\sfA}
	\end{bmatrix}.
\end{align}
The matrix $\bm{\sfZ}_{\sfA}$ of dimension $  N_{\sfA} \times N_{\sfA} $ is the mutual impedance between all active antennas and  matrix $\bm{\sfZ}_{\sfP}$ of dimension $ N_\sfP  N_{\sfA} \times  N_\sfP N_{\sfA} $  is the mutual impedance between all parasitic antennas. 
We combine these matrices into a single mutual impedance matrix for the hybrid array and denote as $ \bm{\sfZ}_{\mathsf{TX}}$. Using block matrix notation, we have $	\bm{\sfZ}_{\mathsf{TX}}=
\begin{bmatrix}
	\bm{\sfZ}_{\sfA} &  \bm{\sfZ}_{\sfm}^\rmT \\
	\bm{\sfZ}_{\sfm}  &  \bm{\sfZ}_{\mathsf{P}}
\end{bmatrix} .$
Let the variable load at the port of the
$i$th parasitic element corresponding to the $j$th active antenna be 
${Z_{\mathsf{R},i, j}}$ and let $\bm{\sfZ}_{\mathsf{R}, j,j}=\mathsf{diag}\{[Z_{\mathsf{R},1, j}, \dots, Z_{\mathsf{R},N_{\mathsf{P}}, j}]\}$.
The parasitic reconfigurable load matrix is generalized as 
\begin{align}\label{eqn: ZR def}
	\bm{\sfZ}_{\mathsf{R}}\!=\!\mathsf{diag}\{[\underbrace{Z_{\mathsf{R},1, 1}, \dots, Z_{\mathsf{R},N_{\mathsf{P}}, 1}}_{\mathsf{diag}\{\bm{\sfZ}_{\mathsf{R}, 1,1}\}}, \dots, \underbrace{Z_{\mathsf{R},1, N_\sfA}, \dots, Z_{\mathsf{R},N_{\mathsf{P}}, N_\sfA}}_{\mathsf{diag}\{ \bm{\sfZ}_{\mathsf{R}, N_\sfA,N_\sfA}  \}}]^\rmT\!\}\!.
\end{align}
Following the steps similar to \eqref{eqn:  v Zi for single active multi parasitic}, we express the parasitic beamforming current vector as
\begin{align}\label{eqn: iP for multi active multi passive}
	\bm{\sfi}_{\sfP} = -( \bm{\sfZ}_{\mathsf{P}} +\bm{\sfZ}_{\mathsf{R}} )^{-1}\bm{\sfZ}_{\sfm}  	\bm{\sfi}_\sfA.
\end{align}
From \eqref{eqn: iP for multi active multi passive}, we see that the parasitic array beamforming vector depends on the reconfigurable load matrix $\bm{\sfZ}_{\mathsf{R}}$ and the active antenna current beamforming vector $\bm{\sfi}_\sfA$, both of which can be independently configured.

					\begin{figure}
						\centering
						\includegraphics[width=0.5\textwidth]{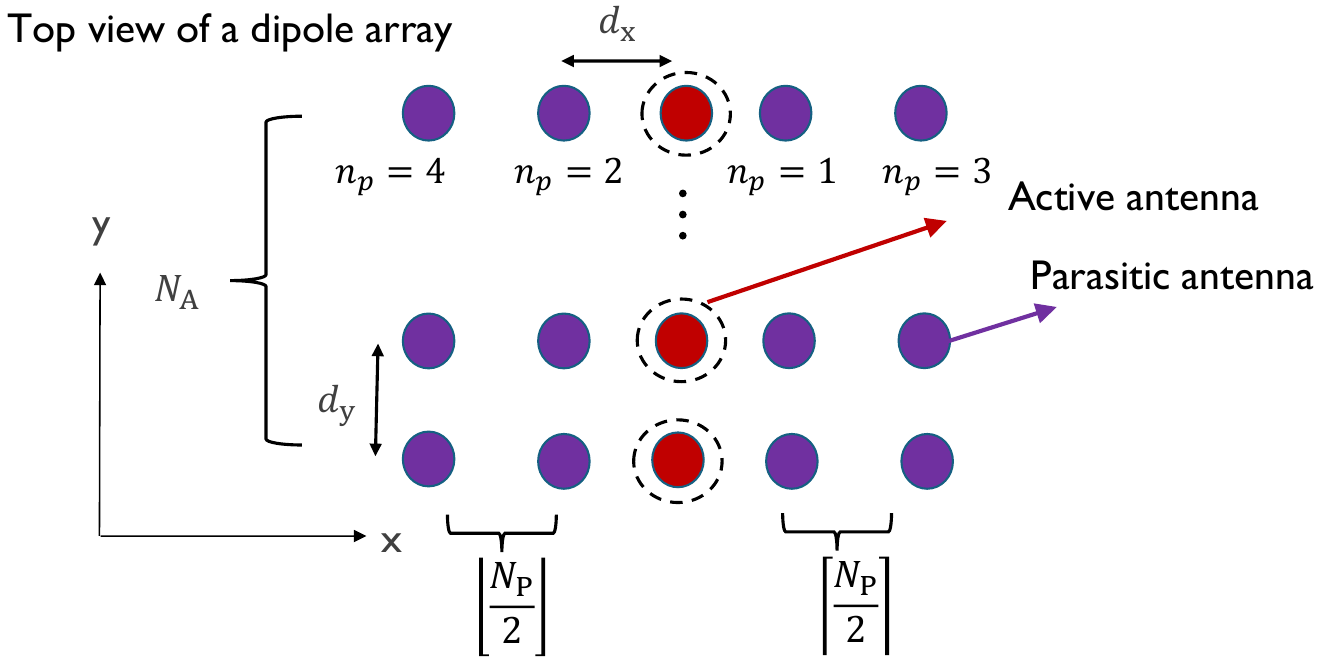}    
						\caption
						{Planar array with $N_\sfA$ active antennas (red) and $N_\sfP N_\sfA$ parasitic elements (purple).
						}
						\label{fig: planar array}
					\end{figure}

					 We now generalize to the geometric multi-path channel with a planar array from	Fig.~\ref{fig: planar array}.
					  In this work, we only consider steering in the $xy$ plane.
					 To define the channel model with planar array, we use the following definition of array steering vectors along $x$ and $y$ directions.
					 \begin{subequations}
					 	\begin{align}
					 			\ba_{\sfx}(\theta)&=\bigg[e^{-\sfj 2\pi \floor*{\frac{N_\sfP}{2}} \frac{ d_\sfx\sin(\theta)}{\lambda_{\sfc}}}, \dots,  e^{-\sfj 2\pi \frac{d_\sfx\sin(\theta)}{\lambda_{\sfc}}}, e^{\sfj 2\pi \frac{d_\sfx\sin(\theta)}{\lambda_{\sfc}}} ,   \nonumber \\&\dots, e^{\sfj 2\pi \ceil*{\frac{N_\sfP}{2}} \frac{ d_\sfx\sin(\theta)}{\lambda_{\sfc}}}  \bigg]^\rmT,
					 	\end{align}
					 		\begin{align}
					 				\ba_{\sfy}(\theta)= \bigg[1, e^{-\sfj 2\pi \frac{d_\sfy\cos(\theta)}{\lambda_{\sfc}}},   \dots, e^{-\sfj 2\pi \frac{(N_\sfA-1)d_\sfy\cos(\theta)}{\lambda_{\sfc}}} \bigg]^\rmT.
					 		 	\end{align}	
					 \end{subequations}
					 The geometric narrowband multi-path channel has $L$ paths where the complex gain of the $\ell$th path is $\alpha_\ell$ and the angle of incidence is $\theta_\ell$.
					 Let $\bh_\sfA=[h_{\sfA,1}, \dots, h_{\sfA,N_\sfA}]^\rmT$ be the channel vector corresponding to the active antennas
					which  are along $y$ axis. We define $\bh_\sfA=\frac{1}{\sqrt{L}}\sum_{\ell=1}^L \alpha_{\ell}\ba_{\sfy}(\theta_\ell)$.
					Similarly, let $\bh_\sfP=[\bh_{\sfP,1}^\rmT,\dots, \bh_{\sfP,N_\sfA}^\rmT]^\rmT$ be the channel vector corresponding to the parasitic elements. We define $\bh_\sfP = \frac{1}{\sqrt{L}}\sum_{\ell=1}^L \alpha_{\ell}(\ba_{\sfy}(\theta_\ell)\kron 	~\ba_{\sfx}(\theta_\ell) )$.
We define the channel vector for the hybrid parasitic array as  $\bh = [\bh_\sfA^\rmT, \bh_\sfP^\rmT]^\rmT$.
Including large-scale fading, we have $\bm{\sfz}_{\mathsf{RT}}= {\gamma}\bh$.
					 For $L=1$ and $N_\sfA=1$, we can verify that  $\bh$ simplifies to the  case of LOS channel with single active and multiple parasitic elements.

\subsection{SNR maximization under radiated power constraints}\label{sec: SNR max problem}

	In this section,	we  express the $\mathsf{SNR}$ defined in \eqref{eqn: SNR def} for the case of multi-active and multi-parasitic array with a multi-path channel  
		in terms of the active antenna current vector $\bm{\sfi}_\sfA$ and the reconfigurable load matrix $\bm{\sfZ}_{\mathsf{R}}$. 
		Using  $\bm{\sfi}_{\mathsf{TX}}$ and $ \bm{\sfz}_{\mathsf{RT}}$, we have 
		\begin{align}
			\mathsf{SNR}(\bm{\sfi}_\sfA, \bm{\sfZ}_{\mathsf{R}})
&= \frac{|\bm{\sfi}_{\mathsf{TX}}^\rmT \bm{\sfz}_{\mathsf{RT}}|^2 }{\sigma^2}
			, \\
			&\stackrel{(a)}{=} \frac{\gamma^2}{\sigma^2}|\bm{\sfi}_\sfA^\rmT \underbrace{(\bh_\sfA  - \bm{\sfZ}_{\sfm}^\rmT ( \bm{\sfZ}_{\mathsf{P}} +\bm{\sfZ}_{\mathsf{R}} )^{-1} \bh_\sfP  )}_{\bh_{\mathsf{eff}}}|^2,\label{eqn: heff def}
		\end{align}
				where equality $(a)$ follows from \eqref{eqn: iP for multi active multi passive}.
				The effective wireless channel $\bh_{\mathsf{eff}}=[h_{\mathsf{eff},1}, \dots, h_{\mathsf{eff},N_\sfA}]^\rmT$ experienced by the active antenna array depends on the  multi-path channel, mutual impedance between matrices, and reconfigurable parasitic load matrix.  This effective channel is dynamically tuned through the matrix $\bm{\sfZ}_{\mathsf{R}}$. 

			The transmit current vector is constrained to limit the total radiated power.
				 The total transmit power radiated from the antenna array is expressed using \eqref{eqn: trp} as 
				 \begin{equation}\label{eqn: P tx iA iP}
			P_{\mathsf{TX}}= \begin{bmatrix}
				\bm{\sfi}_\sfA^{\ast} &  \bm{\sfi}_{\sfP}^{\ast}
			\end{bmatrix}\begin{bmatrix}
				\cR\{\bm{\sfZ}_{\sfA}    \}  &  \cR\{\bm{\sfZ}_{\sfm}^\rmT  \}\\
				\cR\{\bm{\sfZ}_{\sfm}  \} &  \cR\{ \bm{\sfZ}_{\sfP} \}
			\end{bmatrix}\begin{bmatrix}
				\bm{\sfi}_\sfA\\
				\bm{\sfi}_{\sfP}
			\end{bmatrix}.
				 \end{equation}
				 To express the power in terms of $	\bm{\sfi}_\sfA$ only, we define an effective impedance matrix $\bm{\sfZ}_{\mathsf{eff}}$ as
				 	\begin{align}\label{eqn: Z eff def}
				 	&\bm{\sfZ}_{\mathsf{eff}}=\cR\{\bm{\sfZ}_{\sfA} \} + \bm{\sfZ}_{\sfm}^\ast ( \bm{\sfZ}_{\mathsf{P}} +\bm{\sfZ}_{\mathsf{R}} )^{-\ast}\cR\{ \bm{\sfZ}_{\sfP} \} (\bm{\sfZ}_{\mathsf{P}} +\bm{\sfZ}_{\mathsf{R}} )^{-1}\bm{\sfZ}_{\sfm}    \nonumber \\&-\cR\{ \bm{\sfZ}_{\sfm}^\rmT\}(\bm{\sfZ}_{\mathsf{P}} +\bm{\sfZ}_{\mathsf{R}} )^{-1}\bm{\sfZ}_{\sfm}-    \bm{\sfZ}_{\sfm}^\ast ( \bm{\sfZ}_{\mathsf{P}} +\bm{\sfZ}_{\mathsf{R}} )^{-\ast}  \cR\{\bm{\sfZ}_{\sfm}  \}.
				 \end{align}
				 	Using \eqref{eqn: iP for multi active multi passive}, \eqref{eqn: P tx iA iP}, and \eqref{eqn: Z eff def}, we express $	P_{\mathsf{TX}}$ in terms of $	\bm{\sfi}_\sfA$ and   $\bm{\sfZ}_{\mathsf{eff}}$ as $P_{\mathsf{TX}}=\bm{\sfi}_\sfA^{\ast}  \bm{\sfZ}_{\mathsf{eff}}\bm{\sfi}_\sfA$. 
				The effective impedance matrix $\bm{\sfZ}_{\mathsf{eff}}$ is a function of the reconfigurable load matrix and mutual impedance matrices. This effective matrix is dynamically tuned through $\bm{\sfZ}_{\mathsf{R}}$. For a conventional array with active elements only,  $\bm{\sfZ}_{\mathsf{eff}}= \cR\{\bm{\sfZ}_{\sfA} \}$.

				We now formulate the optimization problem with the  objective as the $\mathsf{SNR}$ and a constraint of $P_{\mathsf{max}}$ on the total radiated power from the transmit array.   The optimization variables include  the active antenna beamforming vector $	\bm{\sfi}_\sfA$ and the imaginary part of the parasitic reconfigurable load matrix $\cI\{ \bm{\sfZ}_{\mathsf{R}}\}$.  The  beamforming optimization problem for the hybrid architecture is mathematically expressed as 
					\begin{subequations}\label{eqn: prob SNR opt}
					\begin{alignat}{3}
						\textbf{P4: }\bm{\sfi}_\sfA^{\star},  \cI\{ \bm{\sfZ}_{\mathsf{R}}^\star\} &=	 \underset{\bm{\sfi}_\sfA \in \bbC,\cI\{ \bm{\sfZ}_{\mathsf{R}}\} \in \bbR}{\mbox{ argmax }} 	\mathsf{SNR}(\bm{\sfi}_\sfA, \bm{\sfZ}_{\mathsf{R}}),\\
						&\text{ s.t. } \bm{\sfi}_\sfA^{\ast}   \bm{\sfZ}_{\mathsf{eff}}\bm{\sfi}_\sfA \leq P_{\mathsf{max}}.\label{eqn: power constraint}
					\end{alignat}
				\end{subequations}
	The problem 		\textbf{P4} is challenging to solve directly because the optimization variables are tightly coupled in both objective and constraints.

				\subsection{Low-complexity closed-form solution for \textbf{P4}}\label{sec: low complexity closed form soln}
				
	In this section, we propose a low-complexity closed-form approach for configuring $\bm{\sfi}_\sfA $ and $\cI\{ \bm{\sfZ}_{\mathsf{R}}\} $  that uses an approximation  similar to Section~\ref{sec: Line-of-sight beamforming optimization  for a parasitic array with single active element}. We define an upper bound on the original objective as $	\mathsf{SNR}_{\mathsf{up}}(\bm{\sfi}_\sfA, \bm{\sfZ}_{\mathsf{R}})$  using the triangle inequality as 
	\begin{align}\label{eqn: SNR up}
\mathsf{SNR}_{\mathsf{up}}(\bm{\sfi}_\sfA, \bm{\sfZ}_{\mathsf{R}}) = \frac{\gamma^2}{\sigma^2} \left(\sum_{\ell=1}^{N_\sfA}  |\mathsf{i}_{\sfA,\ell}| |h_{\mathsf{eff},\ell}|\right)^2.
	\end{align}
The optimization in \textbf{P4} with respect to $\cI\{ \bm{\sfZ}_{\mathsf{R}}\}$ is approximately solved by maximizing the upper bound $\mathsf{SNR}_{\mathsf{up}}(\bm{\sfi}_\sfA, \bm{\sfZ}_{\mathsf{R}})$.
The term $h_{\mathsf{eff},\ell}$ depends on the parasitic load reactances of all elements as defined in \eqref{eqn: heff def}.  
We use a diagonalization approximation similar to Section~\ref{subsec: Parasitic array beamforming problem formulation}. By defining $\bm{\sfD}_\sfP=\mathsf{diag}\{ \bm{\sfZ}_{\mathsf{P}} \}$, we  compute an approximate expression for the $\ell$th element of the effective channel $\hat{h}_{\mathsf{eff},\ell}$ expressed as 
\begin{align}
	\hat{h}_{\mathsf{eff},\ell} = h_{\sfA,\ell}\left(1- \frac{	\bm{\sfz}_{\sfm, \ell, \ell}^\rmT(\bm{\sfD}_{\sfP, \ell , \ell}  + \bm{\sfZ}_{\mathsf{R}, \ell ,\ell } )^{-1} \bh_{\sfP,\ell} }{h_{\sfA,\ell}} \right).
\end{align}
By using the approximation $\hat{h}_{\mathsf{eff},\ell}$ in  \eqref{eqn: SNR up}, we decouple the parasitic reactance optimization problem into $N_\sfA$ sub-problems as maximizing the upper bound is obtained by maximizing each summand. The solution to the $\ell$th sub-problem  $\cI\{ \bm{\sfZ}^{\star}_{\mathsf{R}, \ell, \ell}\}$ is mathematically expressed as
	\begin{align}
		\textbf{P5: }	 \cI\{ \bm{\sfZ}^{\star}_{\mathsf{R}, \ell, \ell}\}= \underset{ \cI\{ \bm{\sfZ}_{\mathsf{R}, \ell, \ell}\}}{\mbox{ argmax }} | 	\hat{h}_{\mathsf{eff},\ell} |^2, \forall~\ell=\{1, \dots, N_\sfA\}.
	\end{align}
Each sub-problem in 	\textbf{P5} can be expressed in the form similar to  problem \textbf{P2} because the matrix $(\bm{\sfD}_{\sfP, \ell , \ell}  + \bm{\sfZ}_{\mathsf{R}, \ell ,\ell } )^{-1}$ is a diagonal matrix.  This enables us to  apply the closed-form solution proposed in Theorem~\ref{thm: Imag reconfig reactance} to this problem as well.

The mathematical form of each sub-problem in \textbf{P5} resembles that of problem \textbf{P1}. Hence, the closed-form solution to problem \textbf{P1} proposed in Theorem~\ref{thm: Imag reconfig reactance} is directly applied to each sub-problem in \textbf{P5}.  After substituting the closed-form solution of $\cI\{ Z_{\mathsf{R}, i,j}^{\star}  \}$ in the objective and constraints of problem \textbf{P4}, we optimally solve for the active antenna beamforming vector $	\bm{\sfi}_\sfA$.
	The  closed-form solution to problem 	\textbf{P4} based on the solution of the approximate sub-problems in 	\textbf{P5} is denoted as $\bm{\sfi}_\sfA^{\star}$  and $Z_{\mathsf{R}, i,j}^{\star}  $. The imaginary part of $Z_{\mathsf{R}, i,j}^{\star}  $ is 
\begin{align}\label{eqn: Izr star closed-form}
	&\cI\{ Z_{\mathsf{R}, i,j}^{\star}  \}= - \cI\{Z_\sfP\} \nonumber \\ &
	-\frac{\cot\left(\frac{{\angle\left(\frac{[\bh_{\sfP,j}]_i  [\bm{\sfz}_{\sfm, j,j}]_i}{h_{\sfA,j}}\right)-\angle\left(   1-\frac{\zeta}{2 h_{\sfA,j}}\bh^\rmT_{\sfP,j}\bm{\sfz}_{\sfm, j, j}\right)}}{2}	  \right)}{\zeta}.
\end{align}
By substituting $\cI\{ Z_{\mathsf{R}, i,j}^{\star}  \}$ in \eqref{eqn: ZR def}, \eqref{eqn: heff def}, and \eqref{eqn: Z eff def}, we denote $ 	\bm{\sfZ}_{\mathsf{R}}^{\star}$, $\bh_{\mathsf{eff}}^{\star}$, and $\bm{\sfZ}_{\mathsf{eff}}^{\star}$.  The closed-form expression for $\bm{\sfi}_\sfA^{\star}$ is 
\begin{align}\label{eqn: iA star}
	\bm{\sfi}_\sfA^{\star}=  \sqrt{P_{\mathsf{max}}} \frac{(\bm{\sfZ}_{\mathsf{eff}}^{\star})^{-1}{\bh_{\mathsf{eff}}^{\star}}^\rmc}{\|(\bm{\sfZ}_{\mathsf{eff}}^{\star})^{-1/2}{\bh_{\mathsf{eff}}^{\star}}^\rmc     \|}.
\end{align}

The solution for the active antenna beamforming current in \eqref{eqn: iA star} is optimal  for the problem \textbf{P4} assuming that 
	$	\bm{\sfZ}_{\mathsf{R}}$,  $\bm{\sfZ}_{\mathsf{eff}}$, and $\bh_{\mathsf{eff}}$ are fixed. While deriving a closed-form expression for the parasitic reactance in \eqref{eqn: Izr star closed-form}, several approximations are made, leading to a sub-optimal solution. The active antenna beamforming vector, however,  compensates for this sub-optimality by leveraging its optimal closed-form solution in \eqref{eqn: iA star}.  
	This solution strategy is consistent with prior work on hybrid beamforming using conventional antennas, as seen in \cite{ayach_spatially_2014}. In the next section, we compare the proposed beamforming approach with various benchmarks.

				\section{Spectral efficiency  and energy efficiency numerical results  }\label{subsec: Numerical results }
				
				\subsection{Radiation pattern validation of parasitic array with Feko}\label{subsec: Validation of the radiation pattern of parasitic array with Feko simulations}

				In this section, we validate the theoretical  expression of the parasitic array $\mathsf{SNR}$ from \eqref{eqn: heff def} for a LOS channel based on the circuit theory approach through computational electromagnetic simulations in the  software Feko.			
				  In our simulations, we use thin wire dipole antenna for both active and parasitic elements. The planar array configuration  in Fig.~\ref{fig: planar array} is implemented  with $N_\sfA=6$, $N_\sfP=2$, $d_\sfx=0.4 \lambda_{\sfc}$, and $d_\sfy=0.5 \lambda_{\sfc}$.  The simulation is performed at $f_\sfc=7$ GHz.
					In Feko, we use active voltage sources for the six antennas placed on the $y$ axis. Each active antenna is connected to a voltage source with unit magnitude and zero phase angle, i.e., $	\bm{\sfv}_{\sfA}=[1,1,1,1,1,1]^\rmT$. For the remaining 12 dipole antennas, we attach reconfigurable loads.  The resistance of each load is assumed to be a fixed  and set to a small value of  $0.05~\Omega$ to avoid significant losses.  The reactances are chosen arbitrarily from the set of real numbers. We implement two different configurations for the parasitic reactances. 
				For configuration 1, the reactances are set to $\cI\{ \mathsf{diag}\{\bm{\sfZ}_{\mathsf{R}}\}\}=[-10, 20, 40, 100, 300, 0, 5, 15, 45, 70, -60, -90]^\rmT \Omega$ and for configuration 2, they are set to $\cI\{ \mathsf{diag}\{\bm{\sfZ}_{\mathsf{R}}\}\}=[0, 200, -150, 40, -10, -50, 60, 120, -30, 40, 0, 10]^\rmT \Omega$.
				
				For the theoretical calculations, we extract the $18\times 18$ scattering matrix $\bS_{\mathsf{TX}}$ at 7 GHz and a reference impedance of $50~\Omega$ from Feko. We convert the scattering matrix to impedance matrix using the relation $	\bm{\sfZ}_{\mathsf{TX}}=50 (\bI_{18}-\bS_{\mathsf{TX}})^{-1}(\bI_{18}+\bS_{\mathsf{TX}})$.
				The matrices $\bm{\sfZ}_{\mathsf{A}}$, $\bm{\sfZ}_{\sfm}$ and $\bm{\sfZ}_{\mathsf{P}}$ are obtained through $	\bm{\sfZ}_{\mathsf{TX}}$.  For validating beam pattern in the $xy$ plane, we use $\bh$ with $L=1$ and compute $\bh_\sfP $ and $\bh_\sfA $ as described in Section~\ref{sec: multiport ckt model chan mod gen}.
				We compute $\bm{\sfi}_{\sfA}$ using $\bm{\sfv}_{\sfA}$, $\bm{\sfZ}_{\mathsf{A}}$, $\bm{\sfZ}_{\sfm}$, $\bm{\sfZ}_{\mathsf{P}}$, and $\bm{\sfZ}_{\mathsf{R}}$ as 
				$\bm{\sfi}_{\sfA}=(\bm{\sfZ}_{\mathsf{A}}  - \bm{\sfZ}_{\sfm}^\rmT(\bm{\sfZ}_{\mathsf{P}} +\bm{\sfZ}_{\mathsf{R}} )^{-1} \bm{\sfZ}_{\sfm} )^{-1}\bm{\sfv}_{\sfA}$. Substituting $\bm{\sfi}_{\sfA}$, $\bm{\sfZ}_{\sfm}$, $\bm{\sfZ}_{\mathsf{P}}$,  $\bm{\sfZ}_{\mathsf{R}}$, $\bh_\sfP $, and $\bh_\sfA $ in \eqref{eqn: heff def}, we obtain the beamforming  pattern.  For comparison purposes, we normalize this pattern such that the maximum value across the angular domain corresponds to 0 dB.
				In Fig.~\ref{fig:config feko}(a) and  Fig.~\ref{fig:config feko}(b), we show the comparison of the theoretical  pattern and the simulated  pattern from Feko for two different parasitic reactance configurations.
				For both configurations, we see that the simulated and theoretical beam patterns are well aligned. This validates the theoretical expression derived in \eqref{eqn: heff def}. A benefit of using the proposed model is that it does not require updated simulation results each time the parasitic reactances change. The matrices $\bm{\sfZ}_{\mathsf{A}}$, $\bm{\sfZ}_{\sfm}$ and $\bm{\sfZ}_{\mathsf{P}}$ are independent of the load configurations and depend only on the antenna and array geometry.
				For the further simulations, we will use the impedance matrices
				$\bm{\sfZ}_{\mathsf{A}}$, $\bm{\sfZ}_{\sfm}$ and $\bm{\sfZ}_{\mathsf{P}}$ obtained from Feko and use the  expression in  \eqref{eqn: heff def} to optimize $\bm{\sfi}_\sfA$ and $ \cI\{ \bm{\sfZ}_{\mathsf{R}}\}$.

				\begin{figure}
					\centering
					\begin{subfigure}[t]{0.5\linewidth}
						\centering
						\includegraphics[width=1\linewidth]{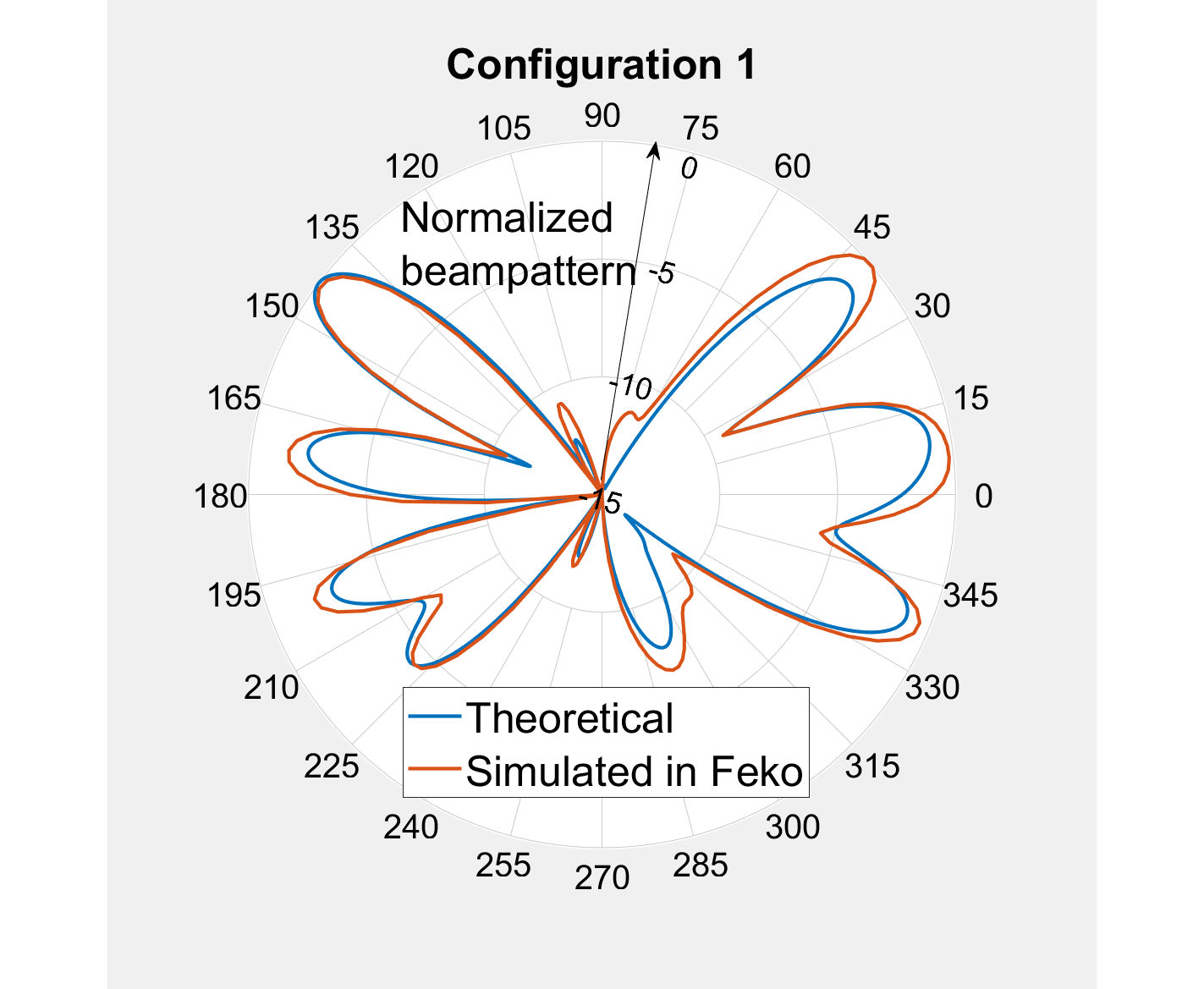}    
						\caption
						{
						}
						\label{fig: config 1}
					\end{subfigure}~\hfil
					\begin{subfigure}[t]{0.5\linewidth}
						\centering
						\includegraphics[width=1\linewidth]{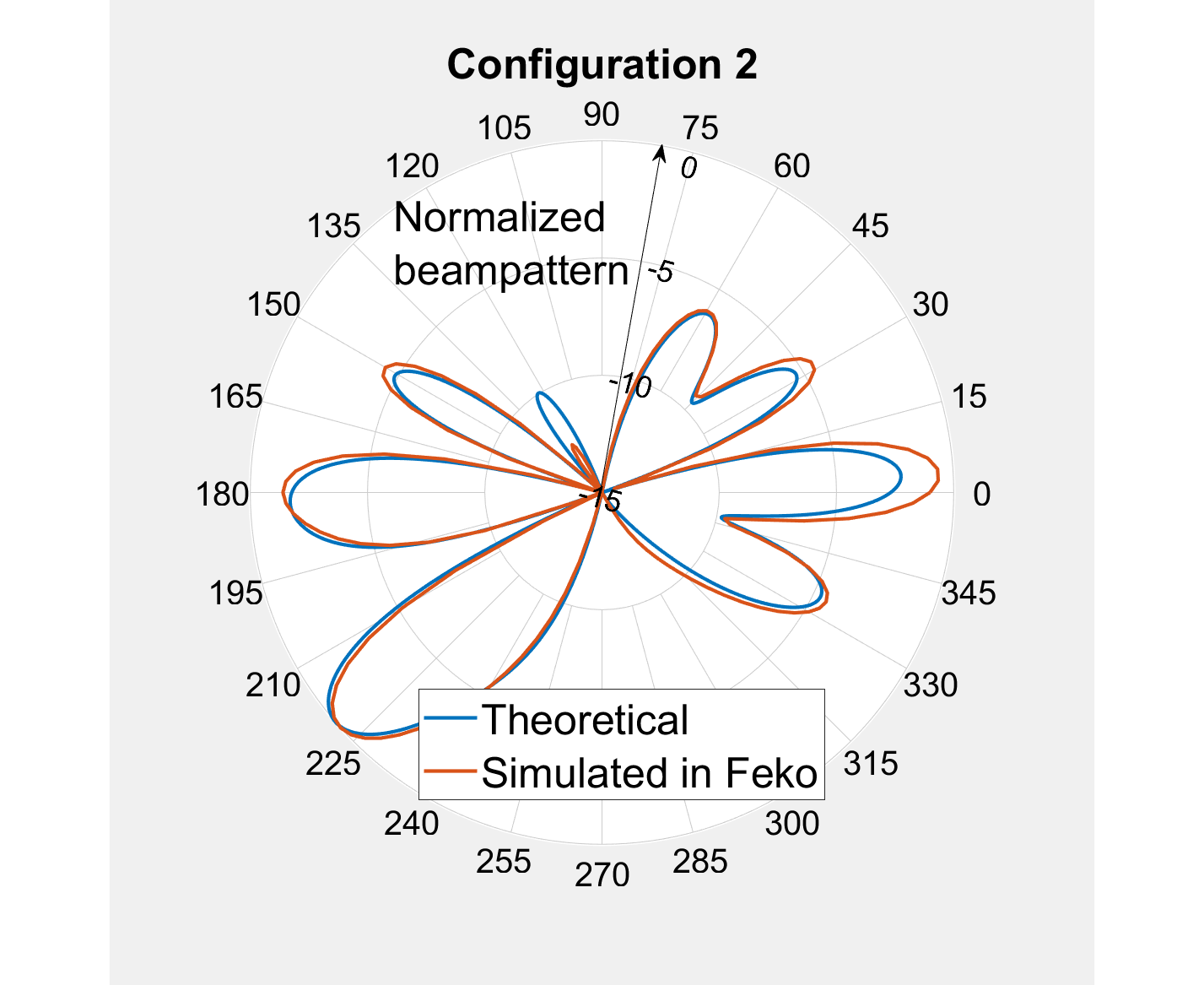}    
						\caption
						{
						}
						\label{fig:config 2}
					\end{subfigure}					
					\caption{The normalized beamforming pattern is shown as a function of the azimuth angle for two different parasitic configurations.
						The theoretical beamforming  pattern closely matches that of the  Feko simulations and thus demonstrates the validity of the circuit theory approach for modeling parasitic arrays.}\label{fig:config feko}
				\end{figure}

				\subsection{Spectral efficiency and energy efficiency}
			In this section, we present the spectral efficiency and energy efficiency results for our proposed hybrid reconfigurable parasitic uniform planar array (HRP-UPA) architecture. 
			We also compare with the following three architectural benchmarks.
			\begin{itemize}[leftmargin=*]
				\item Fully digital uniform linear array (FD-ULA):  This array is obtained by excluding all parasitic antennas from the HRP-UPA, i.e., $N_\sfP=0$. The purpose of this comparison is to investigate the gain in spectral and energy efficiency achieved by adding parasitic reconfigurable antenna elements to an FD-ULA architecture, thus converting it to an HRP-UPA.
				\item Fully digital uniform planar array (FD-UPA): This array is obtained by using digitally controlled sources instead of the reconfigurable reactances on the parasitic elements of the HRP-UPA. The purpose of this comparison is to have a theoretical upper bound on the spectral efficiency for the planar array which is attained when all antennas are configured with digitally controlled sources.
				\item Hybrid phase-shifter sub-connected uniform planar array (HPS-UPA):  
				Each antenna is tuned by a phase shifter.  All antennas in each row (parallel to $x$ axis) are connected together to a single active source through a power-splitter.  Each row has an active digitally controlled source. So, the total number of active sources equals number of rows in the planar array.
				The purpose of this comparison is to analyze the spectral efficiency and energy efficiency tradeoff between conventional phased-array based hybrid beamforming and the proposed hybrid parasitic-array. 
			\end{itemize}

		\begin{figure}
			\centering
			\includegraphics[width=0.45\textwidth]{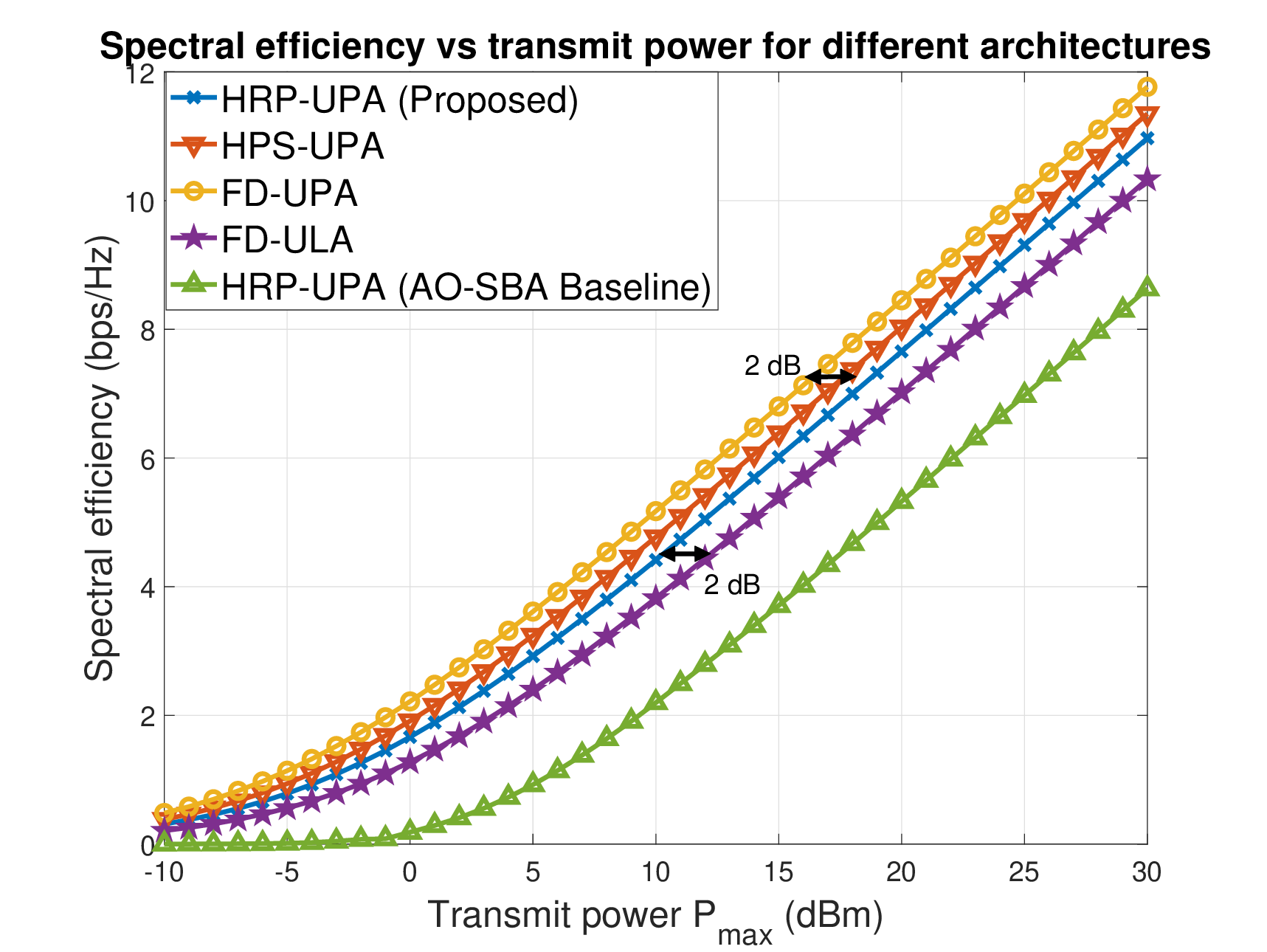}    
			\caption
			{The spectral efficiency is shown as a function of the transmit power $P_{\mathsf{max}}$  for the proposed method and different benchmarks. 
				We see that for our proposed architecture, a 2 dB transmit power reduction is possible compared to 	an FD-ULA with 6 active elements by adding two parasitic elements per active antenna, thus converting it to an HRP-UPA.
			}
			\label{fig:SE_vs_Pmax_NA6_NP2}
		\end{figure}
		
		For a given $\mathsf{SNR}$, the spectral efficiency (in bps/Hz) is defined as $\mathsf{SE}=\log_2(1+\mathsf{SNR})$.
		The  $\mathsf{SNR}$ for HRP-UPA, FD-ULA,  
		FD-UPA, and HPS-UPA  are computed using the same angle of departure and path gains for the wireless channel and fixed value of total radiated transmit power $P_{\mathsf{max}}$.
		For  HRP-UPA, we use the proposed solution for the parasitic reactance and active antenna current from \eqref{eqn: Izr star closed-form} and \eqref{eqn: iA star}. For FD-ULA and
		FD-UPA, the optimal active antenna current solution is used from our prior work~\cite{deshpande2023analysis}. For HPS-UPA, we configure the phase shifters such that they apply the conjugate phase of the channel coefficient.
		For the sub-connected array, we use the precoding matrix $\bF_{\mathsf{PS}}\in \bbC^{ N_\sfA \times N_\sfA (N_\sfP+1)}$ where each row has $(N_\sfP+1)$ non-zero complex phases.  We define an effective channel for  HPS-UPA similar to \eqref{eqn: heff def} as $\bh_{\mathsf{eff, PS}}=\bF_{\mathsf{PS}}\bh $ and an effective impedance matrix similar to \eqref{eqn: Z eff def} as $\bm{\sfZ}_{\mathsf{eff,PS}}=\bF_{\mathsf{PS}}^\rmc \cR\{\bm{\sfZ}_{\mathsf{TX}}\} \bF_{\mathsf{PS}}^\rmT$.
		The active currents for HPS-UPA  are set similar to \eqref{eqn: iA star} using $\bh_{\mathsf{eff, PS}}$ and  $\bm{\sfZ}_{\mathsf{eff,PS}}$.
		 The $\mathsf{SNR}$ expressions for these  different architectures are summarized in Table~\ref{tab: SNR and power consumption comparison}.
		 We also compare the proposed closed-form solution for the parasitic reactance and active antenna current to a baseline approach called alternating optimization stochastic beamforming algorithm (AO-SBA) from \cite{papageorgiou2018efficient}. The baseline is based on minimizing a metric dependent on the cosine similarity between the desired  and the actual current vector.
		 
		 \begin{figure}
		 	\centering
		 	\includegraphics[width=0.45\textwidth]{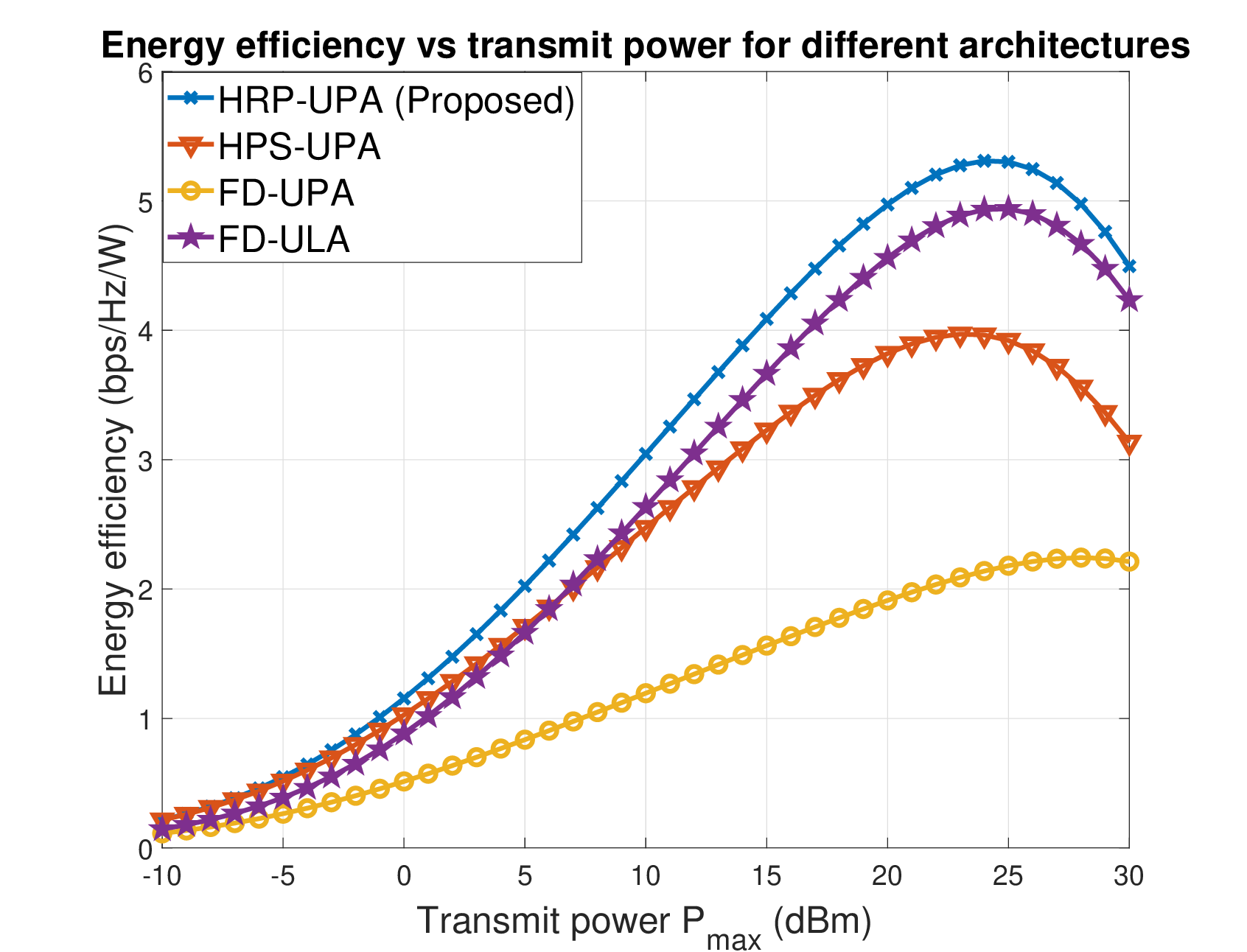}    
		 	\caption
		 	{The energy efficiency is shown as a function of the transmit power $P_{\mathsf{max}}$ for the proposed method and different benchmarks. 
		 		Our proposed HRP-UPA outperforms all benchmarks in terms of energy efficiency. The transmit power at which the energy efficiency peaks is also the lowest for our proposed architecture.
		 	}
		 	\label{fig:EE_vs_Pmax_NA6_NP2}
		 \end{figure}
		 
		 For the simulations, we set the distance between transmit array and receive antenna as $r=250$ m, bandwidth $\mathsf{BW}=20$ MHz, antenna noise temperature $T_{\sfA}=300$ K, Boltzmann's constant $k_{B}=1.38\times 10^{-23}$ J/K, and radiation resistance $R_r=95.5$~$\Omega$. These values are used to compute the constant $\frac{\gamma^2}{\sigma^2}= \left(\frac{\lambda_\sfc}{4\pi r}\right)^2\frac{R_r}{4 k_B T_{\sfA} \mathsf{BW}}$.  In our simulations, we vary the value of $P_{\mathsf{max}}$ from -10 dBm to 30 dBm.
		Assuming total power consumption is $P_\mathsf{total}$, the energy efficiency (in bps/Hz/W) is defined as $\mathsf{EE}=\frac{\mathsf{SE}}{P_\mathsf{total}}$.
		 We use the power consumption values from the conventional hybrid beamforming literature~\cite{7370753}.
		 For FD-ULA and FD-UPA, the number of RF chain equals the number of antennas. We assume that each RF chain which consists of data converter, mixer, local oscillator, filter, and amplifier, consumes a power of $P_{\mathsf{RFC}}=240 $~mW. For HPS-UPA, there are $N_\sfA$ active RF chains and each antenna is connected to a phase-shifter which consumes a power of $P_{\mathsf{PS}}=30$ mW.
		 A 1-to-3 splitter with insertion loss of 1.6 dB is used to connect the RF chain to the phase-shifters of each antenna in the sub-panel~\cite{10179153}.  Each phase-shifter has 0.7 dB insertion loss~\cite{10179153}. The total insertion loss in HPS-UPA is 2.3 dB. We multiply the factor $\epsilon = 10^{2.3/10}$ to $P_{\mathsf{max}}$ to account for the insertion losses.
		  We assume that the power required to control the reconfigurable reactance through varactor diodes is negligible, i.e., $P_\mathsf{VAR}\approx 0$ similar to \cite{10584442}.
		  For all architectures, we summarize the power consumption expressions in Table~\ref{tab: SNR and power consumption comparison}.

		\begin{table}
			\centering
			\begin{tabular}{ |c|c|c| } 
				\hline
				Architecture & SNR &  $P_\mathsf{total}$\\  \hline
				HRP-UPA  & $\frac{P_{\mathsf{max}}\gamma^2}{\sigma^2}(\bh_{\mathsf{eff}}^{\star})^{\ast}(\bm{\sfZ}_{\mathsf{eff}}^{\star})^{-1}\bh_{\mathsf{eff}}^{\star}$  & \thead{$P_{\mathsf{max}}+ N_\sfA P_{\mathsf{RFC}}$\\ $+  N_\sfA N_\sfP P_\mathsf{VAR} $}\\ \hline
				FD-ULA   &$\frac{P_{\mathsf{max}}\gamma^2}{\sigma^2}\bh_{\sfA}^{\ast}(\cR\{\bm{\sfZ}_{\sfA}\})^{-1}\bh_{\sfA}$  &  $P_{\mathsf{max}}+N_\sfA P_{\mathsf{RFC}}$\\ \hline
				FD-UPA  &  $\frac{P_{\mathsf{max}}\gamma^2}{\sigma^2}\bh^{\ast}(\cR\{\bm{\sfZ}_{\mathsf{TX}}\})^{-1}\bh$ & \thead{$P_{\mathsf{max}}$\\$+N_\sfA(N_\sfP+1) P_{\mathsf{RFC}}$  }  \\ \hline
				HPS-UPA  &  $\frac{P_{\mathsf{max}}\gamma^2}{\sigma^2}\bh_{\mathsf{eff, PS}}^{\ast}(\bm{\sfZ}_{\mathsf{eff,PS}})^{-1}\bh_{\mathsf{eff, PS}}$ &  \thead{$\epsilon P_{\mathsf{max}}+ N_\sfA P_{\mathsf{RFC}}$ \\$+ N_\sfA(N_\sfP+1) P_{\mathsf{PS}}$} \\ \hline
			\end{tabular}
			\caption{Comparing SNR and $P_\mathsf{total}$ of different architectures. }
			\label{tab: SNR and power consumption comparison}
		\end{table}

			We generate simulation results which are averaged over 5000 realizations of the wireless channel $\bh$ defined in Section~\ref{sec: multiport ckt model chan mod gen} with $L=4$, $\theta_{\ell}\sim \cU [-\pi, \pi]$, and $\alpha_\ell \sim \mathcal{N}_\bbC(0,1)$.
			We plot the spectral efficiency in Fig.~\ref{fig:SE_vs_Pmax_NA6_NP2} and energy efficiency in Fig.~\ref{fig:EE_vs_Pmax_NA6_NP2} as a function of the transmit power  $P_{\mathsf{max}}$.
			We observe from Fig.~\ref{fig:SE_vs_Pmax_NA6_NP2} that the proposed approach for configuring the HRP-UPA outperforms the AO-SBA baseline by a huge margin. The AO-SBA is sub-optimal because it is optimized for aligning the beam patterns using cosine similarity and disregarding the coupled magnitude and phase constraint. Our proposed approach performs drastically better because we use the beamforming gain objective subject to explicit phase and magnitude constraint due to the parasitic reactance.
			The HRP-UPA with 6 active antennas and 12 parasitic antennas outperforms the FD-ULA with 6 active antennas in terms of spectral efficiency.  The 12 parasitic antennas provide additional reconfigurability, improving spectral efficiency without any additional power overhead compared to the FD-ULA.   The spectral efficiency  performance of HRP-UPA is slightly lower than HPS-UPA because of the coupled magnitude and phase constraint in parasitic beamforming weight unlike the unit modulus phase shifts.
			Compared to the FD-UPA, which has the highest spectral efficiency, the HRP-UPA requires an additional 2 dB of transmit power to achieve the same spectral efficiency. Despite needing more transmit power than FD-UPA and HPS-UPA for the same spectral efficiency, the proposed HRP-UPA has the highest energy efficiency. 
			Our results demonstrate that using parasitic reconfigurable elements in a hybrid architecture is beneficial from an energy efficiency perspective.

		\begin{figure}
			\centering
			\includegraphics[width=0.45\textwidth]{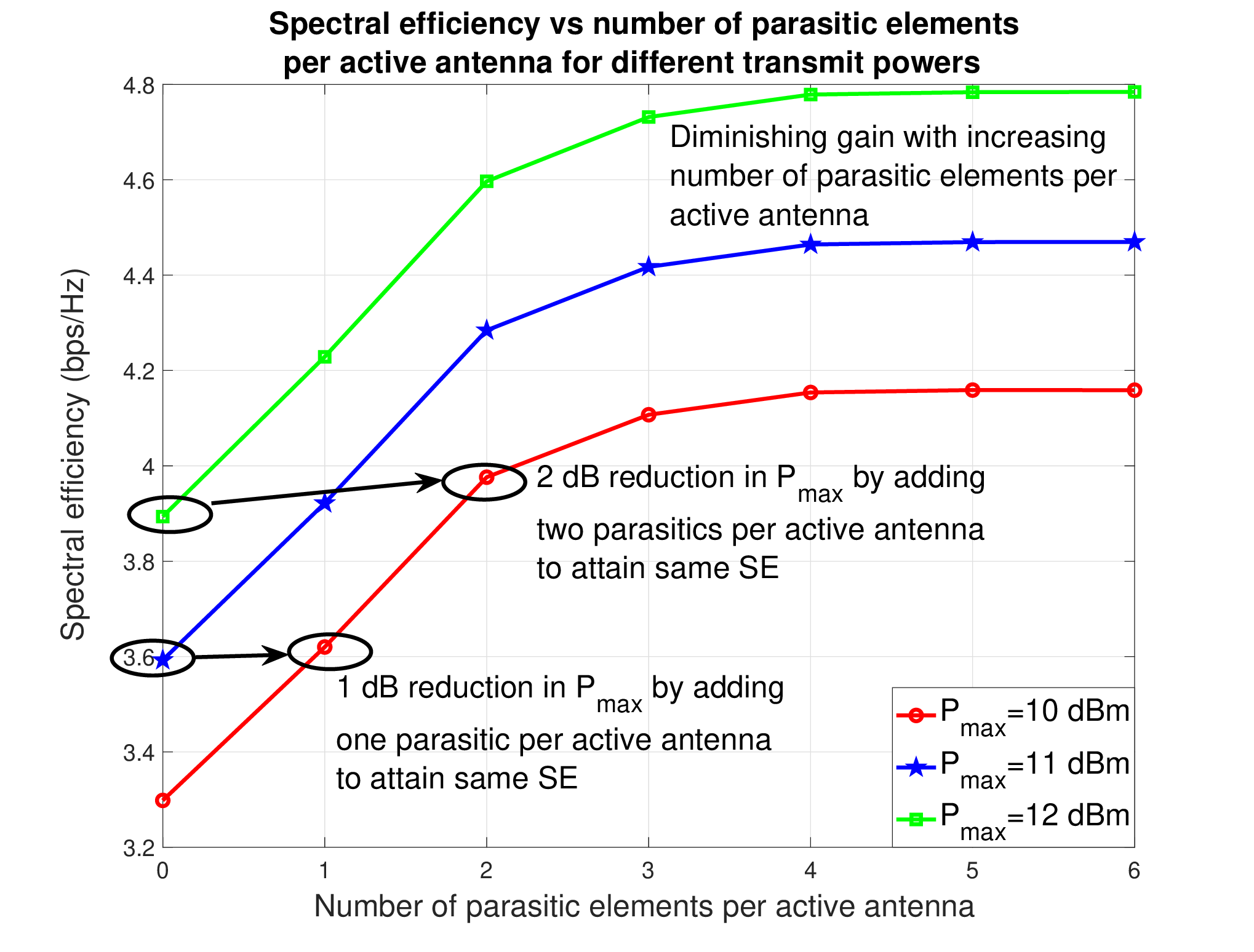}    
			\caption
			{The spectral efficiency is shown as a function of the number of parasitic elements per active antenna.
				We observe that  it is sufficient to have just two parasitic elements corresponding to each active antenna as diminishing gains are observed beyond $N_\sfP=2$. Moreover, we observe a 1 dB power reduction for $N_\sfP=1$ and 2 dB power reduction for $N_\sfP=2$ compared to an FD-ULA with $N_\sfP=0$.
			}
			\label{fig:SE_vs_Pmax_NA4_NP1234}
		\end{figure}

			\subsection{Variation in the number of  parasitic elements, active antennas, and parasitic element spacing}
			
			 We now vary the number of parasitic elements corresponding to each active antenna. 
			We generate the planar array  for $N_\sfA=4$ and vary $N_\sfP$ from 0 to 6.
		In Fig.~\ref{fig:SE_vs_Pmax_NA4_NP1234}, we plot the spectral efficiency  as a function of $N_\sfP$. 
We observe that the spectral efficiency increases up to $N_\sfP=2$ and then saturates.
			This shows that increasing the number of parasitic elements beyond two per active antenna leads to diminishing gains.  This is because beyond $N_\sfP=2$, the distance between parasitic elements and active antenna increases which leads to lower mutual coupling and reduces the beamforming capability.  Hence, placing fewer  parasitic elements closer to the active antenna is sufficient to attain reasonable spectral efficiency improvement.

			
		We now vary the number of active antennas $N_\sfA$   in the $y$ direction. 
			For this simulation, we use the impedance matrices for the planar dipole array using theoretical closed-form expressions similar to that used in Section~\ref{subsec: Numerical results for LOS beamforming gain optimization} and our prior work~\cite{deshpande2023analysis}. We fix $P_{\mathsf{max}}=10$ dBm and  $N_\sfP=2$.
		In Fig.~\ref{fig:SE_vs_NA}, we plot the spectral efficiency for the proposed hybrid parasitic reconfigurable array and compare it with the three benchmarks as a function of $N_\sfA$. For all architectures, the spectral efficiency increases with $N_\sfA$. We observe that in comparison with FD-ULA and FD-UPA, the number of active antennas required reduce by a factor of $\frac{2}{3}$ in an HRP-UPA.  Hence, by including two parasitic elements for each active antenna, the system is able to enhance the spectral efficiency which allows reduction in the number of active antennas to attain the same spectral efficiency performance. This observation can be useful to improve the spectral efficiency of   power constrained devices with limited number of active RF chains by leveraging the additional reconfigurability offered by parasitic elements.
		To conclude, Fig.~\ref{fig:SE_vs_Pmax_NA4_NP1234} and Fig.~\ref{fig:SE_vs_NA} show that parasitic elements enable a  low-power and lost-cost array architecture because of the reduction in transmit power and number of RF chains.
		
		The mutual coupling plays a significant role in the performance of the parasitic array. The  mutual coupling increases with decrease in spacing but not monotonically~\cite{balanis2015antenna}. Its effect on the beamforming gain and spectral efficiency is also not monotonic. It was shown in \cite{ivrlac_toward_2010} that the transmit beamforming gain trend with spacing varies with the angle. Specifically, the highest gain in the endfire direction occurs when inter-element spacing tends to 0, and for   broadside  when it is  between 0.5 and 1. 
		The spectral efficiency metric captures the average effect over different multi-path channel realizations. This observation motivates us to simulate the  spectral efficiency trend of a parasitic array by varying the normalized inter-element spacing along the $x$ direction, $\frac{d_\sfx}{\lambda_\sfc}$.
	From Fig.~\ref{fig:SE_vs_dx},	we observe that the highest spectral efficiency is observed for moderate inter-element spacing, i.e., $0.15 <\frac{d_\sfx}{\lambda_\sfc} < 0.4$. The mutual coupling within this range is sufficient for good spectral efficiency performance of a parasitic array.

		\begin{figure}
			\centering
			\includegraphics[width=0.45\textwidth]{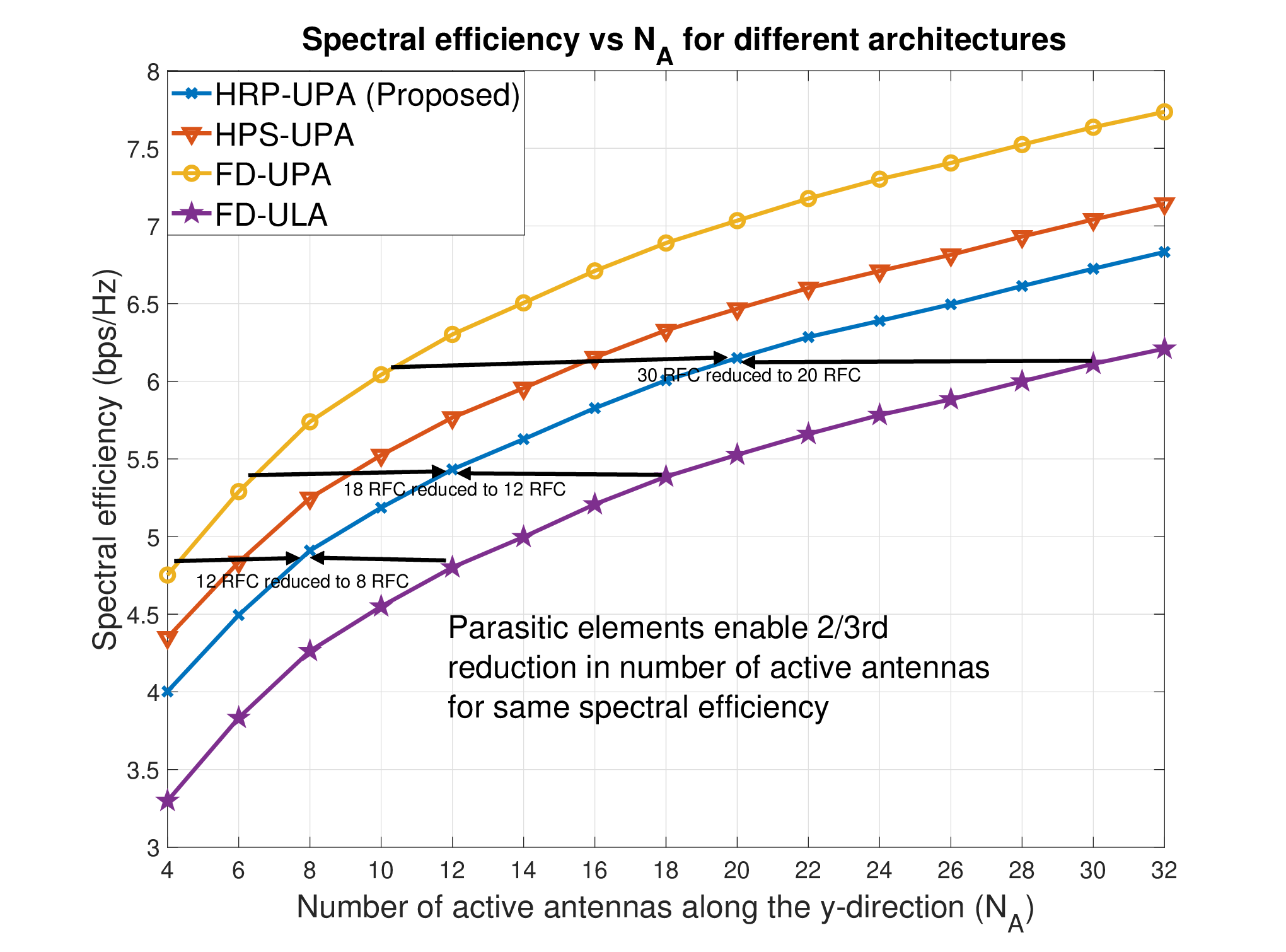}    
			\caption
			{The spectral efficiency is shown as a function of $N_\sfA$.
			 The spectral efficiency of our proposed HRP-UPA with $N_\sfP=2$ and $N_\sfA= 8/ 12/ 20$ is same as the  FD-ULA with $N_\sfA=12/ 18/ 30$ antennas and FD-UPA with  $N_\sfA( N_\sfP+1)=12/ 18/ 30$ antennas respectively. 	The reconfigurability of parasitic elements is leveraged to reduce the number of active antennas. This enables cost reduction in the RF hardware.
			}
			\label{fig:SE_vs_NA}
		\end{figure}

			\begin{figure}
			\centering
			\includegraphics[width=0.5\textwidth]{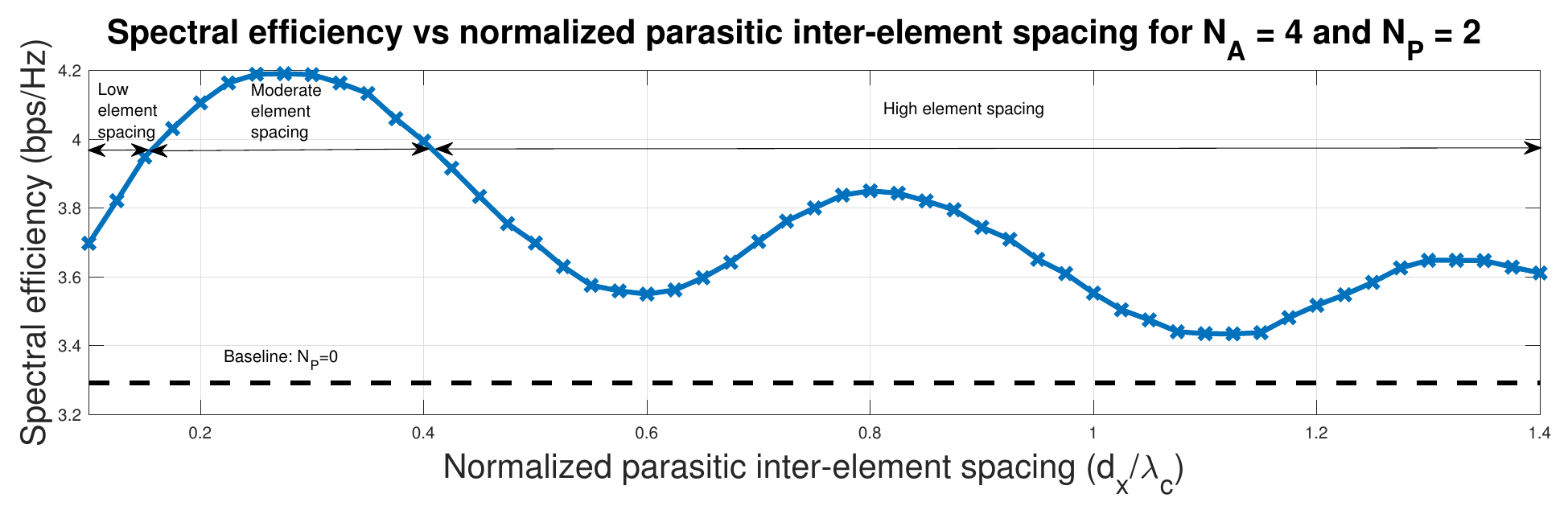}    
			\caption
			{The spectral efficiency is shown as a function of the normalized parasitic inter-element spacing. We observe that for a parasitic array with moderate inter-element spacing, the spectral efficiency is the highest. 
			}
			\label{fig:SE_vs_dx}
		\end{figure}

				\section{Conclusions and future work}\label{sec: Conclusion and future work }

			In this paper, we applied a circuit theory modeling approach to a hybrid reconfigurable parasitic array with multiple active antennas and several parasitic elements controlled through tunable reactance. 
			As validated through the Feko software, the circuit theory approach is useful because we can use the proposed model for different parasitic configurations instead of doing repetitive simulations.
		From the closed-form solution of the optimal reconfigurable parasitic reactance  for a LOS channel, we identify a fixed reactance term common to all antennas and a dynamic term depending on the target angle.
For the general case of the hybrid array, we leverage this closed-form solution for parasitic reactance. We  configure the active antenna currents to satisfy the transmitted power constraint based on the parasitic reactance configuration.
Using this closed-form solution, we simulate the spectral efficiency and energy efficiency of the hybrid architecture for different array dimensions and element spacings.
			  Compared to conventional benchmarks like fully digital array and hybrid sub-connected phase shifter architectures, the proposed parasitic reconfigurable array is more energy-efficient. The same spectral efficiency performance is achievable with a reduced number of RF chains and  transmit power by adding reconfigurable parasitic elements to a fully digital array. We also identified a moderate inter-element spacing region for the parasitic array design to provide higher spectral efficiency.
			
			In future work, we propose extending the MISO parasitic array to the MIMO case with reconfigurable parasitic antennas on both the transmitter and receiver. 
			 Analyzing reconfigurable parasitic arrays with realistic matching network constraints is an important future direction. Another open challenge is a  wideband generalization  with frequency-selective models for parasitic antennas and matching networks~\cite{10536063}.

				\appendices

				\section{Proof of $\cG(\theta, \bm{\sfZ}_{\mathsf{R}}) \rightarrow \widehat{\cG}(\theta, \bW) $ for small $\|  \bm{\sfE}_\sfP\|_\rmF$}\label{app: proof of approximation}
			
				Let $ \Psi_1= 1  - \ba^\rmT_{\sfP}(\theta)\bW\bm{\sfz}_{\sfm} $  and $\Psi_2= \ba^\rmT_{\sfP}(\theta) \bW \bm{\sfE}_\sfP \bW \bm{\sfz}_{\sfm} $. We have $\widetilde{\cG}(\theta, \bm{\sfZ}_{\mathsf{R}}) = \cG(\theta, \bm{\sfZ}_{\mathsf{R}})- \widehat{\cG}(\theta, \bW) =  |\Psi_2|^2 + 2 \cR\{\Psi_1^\rmc \Psi_2\} - \cO(\| \bm{\sfE}_\sfP \|^2_\rmF) $. Using matrix norm inequality, we  have $|\Psi_2|< \|\ba_{\sfP}(\theta) \|\|\bW \|^2_\rmF \|\bm{\sfz}_{\sfm}\| \| \bm{\sfE}_\sfP\|_\rmF  = \kappa \| \bm{\sfE}_\sfP\|_\rmF $ where $\kappa $ is a constant depending on $\|\ba_{\sfP}(\theta) \|$, $\|\bW \|_\rmF$, and $\|\bm{\sfz}_{\sfm}\|$. As  $\|  \bm{\sfE}_\sfP\|_\rmF \rightarrow 0$, we get $\widetilde{\cG}(\theta, \bm{\sfZ}_{\mathsf{R}})\rightarrow 0$, i.e., $ \cG(\theta, \bm{\sfZ}_{\mathsf{R}}) \rightarrow \widehat{\cG}(\theta, \bW) $.

				
				\section{Proof of Theorem~\ref{thm: closed-form solution for phase}}\label{proof: thm: closed-form solution for phase}
				Let $\chi_1= 1  - \frac{\zeta}{2}\ba^\rmT_{\sfP}(\theta)\bm{\sfz}_{\sfm}$ and $\chi_2=-\frac{\zeta}{2}\ba^\rmT_{\sfP}(\theta)\boldsymbol{\Phi}\bm{\sfz}_{\sfm}$.
				We expand the objective in \eqref{eqn:  Ghat in terms of Phi} as $	\widehat{\cG}(\theta, \boldsymbol{\Phi}) = |\chi_1+\chi_2|^2$.
				The term $\chi_1$ is a constant complex number which does not depend on the beamforming matrix $\boldsymbol{\Phi}$. The term $\chi_2$ depends on $\boldsymbol{\Phi}$.
				Let $\chi_{2,i}=-\frac{\zeta}{2} e^{\sfj \phi_i} [\ba_{\sfP}(\theta)]_i  [\bm{\sfz}_{\sfm}]_i$.
				 We express $\chi_2=  \sum_{i=1}^{N_\sfP} \chi_{2,i}.$
				Using this expression, we expand $\widehat{\cG}(\theta, \boldsymbol{\Phi})$ as
				$	\widehat{\cG}(\theta, \boldsymbol{\Phi}) = |\chi_1|^2+2\cR\left\{  \chi_1^{c} \sum_{i=1}^{N_\sfP} \chi_{2,i} \right\}+   \left|\sum_{i=1}^{N_\sfP} \chi_{2,i}\right|^2.$
				The second term in this expansion  is maximized when the argument of the complex number $\chi_1^{c} \sum_{i=1}^{N_\sfP} \chi_{2,i}$ is zero. This is obtained by setting $\angle(\chi_{2,i})=2m\pi + \angle(\chi_1)$ for any integer $m$.  This setting also maximizes the third term  in the expansion as all phases are aligned.  Simplifying, we get the  result in \eqref{eqn:  Phi i optimal}.

						\bibliographystyle{IEEEtran}
					\bibliography{references_Nitish.bib}

					\end{document}